\documentclass[12pt]{article}

\usepackage{amssymb}%,natbib}
\usepackage{graphicx}
\usepackage{amsmath}
\usepackage{amsfonts,amsthm}

\setlength{\topmargin}{0mm}
\setlength{\oddsidemargin}{-4mm}
\setlength{\evensidemargin}{2mm}
\setlength{\textwidth}{170mm}
\setlength{\textheight}{220mm}
\setlength{\columnsep}{10mm}

\newtheorem{theorem}{Theorem}
\newtheorem{corollary}{Corollary}

\newtheorem{definition}{Definition}
\newtheorem{proposition}{Proposition}

\usepackage[dvipdfmx]{color}
\newcommand{\bm}[1]{\mbox{\boldmath $#1$}}
\newcommand{\argmin}{\mathop{\rm argmin}}
\newcommand{\argmax}{\mathop{\rm argmax}}

\newcommand{\bms}[1]{\mbox{\scriptsize\boldmath $#1$}}
%\sloppy

\title{Minimum information divergence of Q-functions \\ for dynamic treatment resumes%\thanks{Grants or other notes
%about the article that should go on the front page should be
%placed here. General acknowledgments should be placed at the end of the article.}
}
%\subtitle{Divergence for dynamic treatment resumes}

%\titlerunning{Minimum divergence for dynamic treatment resumes}        % if too long for running head

\author{Shinto Eguchi
\thanks{The Institute of Statistical Mathematics, Japan.  Email: {\tt eguchi@ism.ac.jp}}}%, Tachikawa,\\ Tokyo 190-8565, Japan.    email: {\tt eguchi@ism.ac.jp}

%\authorrunning{Short form of author list} % if too long for running head

%        \and Yuki Kosaka \at In stitute of Biometrics Research, NEC Labolatry, Japan\\
%         \email{YK@ism.ac.jp}
              %  \\  

%\date{Received: date / Accepted: date}
% The correct dates will be entered by the editor
\date{\today}

\begin{document}
\bibliographystyle{jae.bst}
\maketitle

\begin{abstract}
This paper aims at presenting a new application of information geometry to reinforcement learning focusing on dynamic treatment resumes.
In a standard framework of reinforcement learning, a Q-function is defined as the conditional expectation of a reward given a state and an action for a single-stage situation. 
We introduce an equivalence relation, called the policy equivalence, in the space of all the Q-functions.
A class of information divergence is defined in the Q-function space for every stage.
The main objective is to propose an estimator of the optimal policy function by a method of  minimum information divergence based on a dataset of trajectories.
In particular, we discuss the $\gamma$-power divergence that is shown to have an advantageous property such that
the $\gamma$-power divergence between policy-equivalent Q-functions vanishes.
This property essentially works to seek the optimal policy, which is discussed in a framework of a semiparametric model for the Q-function.  
The specific choices of power index $\gamma$ give interesting relationships of the value function,
and the geometric and harmonic means of the Q-function.  
A numerical experiment demonstrates the performance of the minimum $\gamma$-power divergence method in the context of dynamic treatment regimes.
%, in which the reinforcement learning aims at offering advice for the medical decision based on an approach of evidence-based medicine. moves into limelight for the personalized medicine.

\end{abstract}

%\keywords
%{Information geometry, geometric mean, minimum divergence, optimal policy, semiparametric model, value function}
% \PACS{PACS code1 \and PACS code2 \and more}
% \subclass{MSC code1 \and MSC code2 \and more}

%\newpage

\newpage
\section{Introduction}\label{sec1}
Information geometry has attracted much attention from various research interests as a fundamental tool for mathematical science.
Information geometry is originally a differential geometric approach in statistics, which now integrates  machine learning, informatics, statistical physics, data science, artificial intelligence, decision science, and so forth in a field of mathematical science. 
Cf. Rao \cite{rao1945}, Amari \cite{amari1982, amari2016}, Amari \& Nagaoka \cite{amari-nagaoka2007}, Ay, et al. \cite{ay2017} and Eguchi \& Komori \cite{eguchi-komori2022}. % since Amari (1982, 1985)The key idea is to explore a dualistic structure
%Thus, this methodology provides essential notions in a lot of fields of mathematical sciences accompanying randomness or uncertainty.
We would like to focus on further development of information geometry to provide new approaches for statistics and machine learning.  %Eguchi, S., & Komori, O. (2022). Minimum Divergence Methods in Statistical Machine Learning: From an Information Geometric Viewpoint. Springer Nature.
In particular, we aim at providing the interface between information geometry and machine learning.
In machine learning, research on data learning algorithms is central, in which there are
many concepts inspired by the function of the biological brain and invariants to describe the equilibrium state in a macroscopic system. They have been utilized for elucidating numerical and probabilistic behaviors, and statistical properties of data learning algorithms
and information geometry can provide a unified view of these various studies from a geometric perspective.

This paper introduces an information geometric framework for reinforcement learning.
In particular, we focus on %the geometric framework for 
a direction of dynamic treatment resumes (DTR), which has been broadly requested to give support for the medical decision, cf.
Chakraborty  \& Moodie \cite{chakraborty-moodie2013} and Chapter 8 in \cite{eguchi-komori2022}.
For  a personalized treatment choice,  it is necessary to get a better decision based on huge digital information  such as medical imaging and mobile health instruments for individuals.
Thus, the framework of DTR is mostly applied from the general framework of   reinforcement learning.
However the Markovian property is not valid in DTR, and the conditioning on all the history variables is necessary.
For this, a  sequential multiple assignment randomized trial (SMART) design is formulated and assumed  for individual trajectories recorded from initial observations to final outcomes, cf. Murphy \cite{murphy2005} for a general framework, and
Qian et al. \cite{qian-etal2021} for the excursion effect based on a long-term time dependency of the treatment effect.
A study of dynamic treatment resumes is in the direction of evidence-based decision science, which is parallel to
political sciences and economical studies. 
%Recently, remarkable progress and developments has been made in dynamic treatment resumes for personalized medicine. 
%In this paper, we consider ensemble learning of treatment decisions that account for subject heterogeneity.
%ensemble learning of treatment decisions that account for subject heterogeneity. Outcome weighted learning is proposed due to its similarity to supervised learning classification problems, and the learning algorithms for most classifications are applicable to optimal dynamic treatment planning.
%The learning algorithms for most of the classifications are applicable to find the optimal dynamic treatment plan. The $\Psi$-Loss function of the decision function is proposed to estimate the optimal policy when the treatment is multi-category.
%The ensemble learning based on the loss function is formulated to estimate the optimal policy when the treatment is multi-category. If the generating function $\Psi$ is a monotonically increasing convex function, the expected
%If the generating function $\Psi$ is a monotonically increasing convex function, then minimizing the expected $\Psi$-loss function leads to the optimal policy. It is also well known that the Q-function derived from the distribution also leads to the optimal policy. In fact
%The decision function of this minimization and the Q-function are found to be one-to-one. This relationship can be used to show that a robust policy can be
%The generating function $\Psi$ is characterized in such a way that a robust policy can be derived even when a subgroup of subjects has distributional heterogeneity.
 Dynamic treatment resumes for personalized medicine are being actively attempted to increase the expected outcome for individuals
with  chronic diseases over  long-term care, taking into account their personal history, physiological characteristics, disease status, etc. This trend is expected to expand to various fields, including evidence-based policy making (EBMP), cf. Imai  and Yamamoto \cite{imai-yamamoto2013}.
%Imai, K., & Yamamoto, T. (2013). Identification and sensitivity analysis for multiple causal mechanisms: Revisiting evidence from framing experiments. Political Analysis, 21(2), 141-171.
The key to its development lies in building a strong relationship between reinforcement learning and causal inference.

%We discuss statistical issues such as non-Markovianity, heterogeneity of data, and bias due to observational data have been pointed out not only in dynamic treatment planning but also in the field of human social activities.
%We propose ensemble learning of treatment decisions that takes into account subject heterogeneity.
%For simplicity, we discuss a class of statistical methods to find the optimal strategy in a single-stage setting. A similar discussion can be made in a multiple-stage setting.

%It is necessary to adapt the conventional framework of statistics for prediction and verification to the content of reinforcement learning, but when an action (treatment) is expressed in binary, discriminant analysis approach can be applied and weighted support vector machines have been proposed, cf. Zhao (2012, 2015). 
In DTR a Q-function is defined by the conditional expectation of an outcome given by  covariate and treatment. 
We introduce information divergence in the space of all the Q-functions, and the minimum divergence estimation is discussed from a point of
statistical property.
In particular, we investigate the statistical behavior of the $\beta$-power and $\gamma$-power divergence in the Q-function space focusing on a semiparametric multiplicative model.
For this, the policy equivalence relation is formulated, in which we observe that the $\gamma$-power divergence is well-defined
on  the quotient set.
As a main result, this property shows that the minimum $\gamma$-power divergence estimator is semiparametric consistent.

The paper is organized as follows.
Section \ref{sec2} gives a basic framework for DTR with value functions and Q-functions.  
The policy equivalence is defined by the motivation associated with the main objective of DTR.
Section \ref{sec3} introduces the  $\gamma$-power divergence in the Q-function space.
In particular, a special property of the  $\gamma$-power divergence for the policy equivalence is discussed.
Further, specific choices of the power index $\gamma$ are associated with geometric and harmonic means for a Q-function.
In Section \ref{sec4}, the $\beta$-power divergence is introduced as a typical example of $U$-divergence.
Section \ref{sec5} discusses the estimation for the optimal policy based on observed trajectories.
As a main result, the minimum $\gamma$-power and $\beta$-power divergence estimators are discussed in the context of semiparametric consistency.
Section \ref{sec6} compares the Q-learning and the proposed method, also the relationship for the maximum likelihood method is pointed out.
In section \ref{sec7}, we have a simulation study. 
The main result in Section 5 is confirmed and compared in a simple probability setting with the means and root mean squares estimates.
Section \ref{sec8} gives discussion for the present conclusion and future problems.

%In this approach, we consider a semiparametric model of the value function (expected reward) in a framework of information geometry and provide the class of optimal decision rules. 
%Further, we introduce backward inference that does not assume Markov property in the case of multiple stages.
 
%We consider a problem of estimating an optimal decision function

%Recently, a classification-based method is rapidly exploited to show high performance for estimating an optimal sequential decision rule in a context of dynamic treatment regimes, 

%We pose $f(x,a)$ to satisfy $\sum_{a\in{\cal A}}f(x,a)=0$ without loss of generality.

%\noindent
%Gamma divergence

\section{Framework}\label{sec2}

%\subsection{Single stage framework}

%{$\gamma$-estimator for Q-functions}
%Let us introduce a probabilistic framework of  reinforcement learning  focusing on a dynamic treatment regime.  First, we consider a general setting for a multi stage trajectory. Second,  
For convenience, we discuss a situation of single stage, in which a situation of multi-stage will be discussed later. 	
Let $\bm X$, $A$, and $Y$ be a state variable, an action variable, and a reward variable
with values in $\cal X$, $\cal A$, and $\cal Y$, respectively.  
Assume that the action space $A$ is finite discrete %or ${\cal A}=\{1,\cdots,m\}$ 
and that $Y$ has a nonnegative value with probability one.
Then, the joint density-mass function of $(X,A,Y)$ is written by
\begin{eqnarray}\label{under}
p(\bm x,a,y)=p(y|\bm x,a)p(a|\bm x)p(\bm x),
\end{eqnarray}
where $p(y|\bm x,a)$ is the conditional density function of $Y$ given $\bm X=\bm x$ and $A=a$, $p(a|\bm x)$ is the conditional probability of $A$ given $\bm X=\bm x$. 
%where $p(a|x)$ is assumed to be known.
In a context of dynamic treatment regimes, $\bm X$, $A$, and $Y$ are said to be  covariate, treatment and outcome variables, respectively. 
In the setting, a deterministic policy function $\cal D$ is defined as a mapping of $\cal X$ into ${\cal A}$. 
%, and let $\cal D$ be the set of all policy functions.
The main objective in the dynamic treatment regimes is to find a good policy ${\cal D}(\bm x)$ in the sense that a subject with the covariate $\bm x$
should be received an effective treatment ${\cal D}(\bm x)$ in $\cal A$.
The value function for a policy function ${\cal D}(x)$ is defined by the expected outcome taken by the policy function as a measure of optimality, that is  ${\mathbb V}({\cal D}) = {\mathbb E}_{\cal D}[Y]$, where $\mathbb E_{\mathbb D}$ denotes the expectation with respect to the probability distribution of ${\cal D}(\bm X)$.  
Under a set of regularity conditions, the value function is written as
\begin{eqnarray}\label{value}
        {\mathbb V}({\cal D})  %= \mathbb E_{\cal D} [Y]
   = \mathbb E\Big[\frac{{\mathbb I}\big(A={\cal D}(\bm X)\big)}{p(A|\bm X)}Y\Big],
\end{eqnarray}
%in $\cal D$
where $\mathbb E$ denotes the expectation under the underlying distribution with $p(\bm x,a,y)$ in \eqref{under} and $\mathbb I$ denotes the indicator function. 
The optimal policy  is defined by  
\begin{eqnarray}\label{opt0}
{\cal D}^{\rm opt}=\argmax_{{\cal D}%:{\cal X}\rightarrow{\cal A}
}\mathbb V({\cal D}), 
\end{eqnarray}
cf. Bellman \cite{bellman1957}.
The Q-function is defined by the conditional expectation of $Y$ given $(\bm  X, A)$ as
\begin{eqnarray}\nonumber %\label{Q}
Q(\bm x,a)=\mathbb E[Y|\bm X=\bm x,A=a],
\end{eqnarray}
cf. Watkins \cite{watkins1989} and Sutton and Barto \cite{sutton-arto2018}.
\begin{proposition}\label{prop1}
The optimal policy ${\cal D}^{\rm opt}$ defined in \eqref{opt0} satisfies
\begin{eqnarray}\label{opt}
  {\cal D}^{\rm opt}(\bm x)= \argmax_{a\in{\cal A}} Q(\bm x,a)
\end{eqnarray}
for almost everywhere $\bm x$.
\end{proposition}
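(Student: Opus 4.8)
The plan is to rewrite the value function in \eqref{value} as the $\bm X$-marginal average of the Q-function evaluated along the policy, and then to maximize that integrand pointwise in $\bm x$.

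First I would apply the law of iterated expectations to \eqref{value}, conditioning first on $\bm X$ and then on $(\bm X,A)$. Since $\cal A$ is finite and, under the regularity conditions underlying \eqref{value}, $p(a\,|\,\bm x)>0$ for the relevant $a$, the inner conditional expectation is a finite sum in which the conditioning probability of $A$ cancels the inverse-propensity weight:
\[
\mathbb E\Big[\frac{{\mathbb I}\big(A={\cal D}(\bm X)\big)}{p(A|\bm X)}\,Y \;\Big|\; \bm X=\bm x\Big]
=\sum_{a\in{\cal A}}{\mathbb I}\big(a={\cal D}(\bm x)\big)\,\mathbb E[Y\,|\,\bm X=\bm x,A=a]
=Q\big(\bm x,{\cal D}(\bm x)\big).
\]
Taking the outer expectation yields $\mathbb V({\cal D})=\mathbb E\big[Q(\bm X,{\cal D}(\bm X))\big]=\int_{\cal X}Q(\bm x,{\cal D}(\bm x))\,p(\bm x)\,d\bm x$; the assumption $Y\ge 0$ almost surely is what makes this iterated expectation well defined and the interchange harmless.

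Next I would optimize pointwise. For every $\bm x$ one has $Q(\bm x,{\cal D}(\bm x))\le\max_{a\in{\cal A}}Q(\bm x,a)$, hence $\mathbb V({\cal D})\le\mathbb E\big[\max_{a\in{\cal A}}Q(\bm X,a)\big]$ for every policy ${\cal D}$. Because $\cal A$ is finite, the map $\bm x\mapsto\argmax_{a\in{\cal A}}Q(\bm x,a)$ admits a measurable selection (fix any ordering of $\cal A$ and take the least maximizer), and for such a selection the bound is an equality, so the supremum in \eqref{opt0} is attained. Conversely, if ${\cal D}^{\rm opt}$ is any maximizer, then $\mathbb E\big[\max_{a\in{\cal A}}Q(\bm X,a)-Q(\bm X,{\cal D}^{\rm opt}(\bm X))\big]=0$ with a nonnegative integrand, which forces $Q(\bm x,{\cal D}^{\rm opt}(\bm x))=\max_{a\in{\cal A}}Q(\bm x,a)$ for $p(\bm x)\,d\bm x$-almost every $\bm x$ --- precisely \eqref{opt}.

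I expect no genuine obstacle here: the one-line conditioning identity $\mathbb V({\cal D})=\mathbb E[Q(\bm X,{\cal D}(\bm X))]$ does essentially all the work. The only points deserving a word of care are the measurable-selection step, which is trivial since $\cal A$ is finite, and the reason the conclusion holds only ``for almost everywhere $\bm x$'': the policy's values on $\bm X$-null sets, and the way ties in the $\argmax$ are broken, do not affect $\mathbb V({\cal D})$.
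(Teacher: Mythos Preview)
Your proposal is correct and follows essentially the same route as the paper: both first reduce \eqref{value} to $\mathbb V({\cal D})=\int_{\cal X}Q(\bm x,{\cal D}(\bm x))\,p(\bm x)\,d\bm x$ (the paper via the explicit density factorization, you via iterated conditioning) and then argue that a maximizer of this expectation must maximize the integrand pointwise for almost every $\bm x$. Your version is in fact slightly more complete, as you make explicit the measurable-selection step and the reason for the ``almost everywhere'' qualifier, points the paper leaves implicit.
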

\begin{proof}
It is noted that 
\begin{eqnarray}\nonumber
{\mathbb V}({\cal D})&=&\int_{\cal X}\sum_{a\in{\cal A}}\int_{\cal Y}
\frac{{\mathbb I} (a={\cal D}(\bm x) )}{p(a|\bm x)}yp(y|\bm x,a)p(a|\bm x)p(\bm x)dyd\bm x
%\nonumber\\[2mm]
%&=& \int_{\cal X}\sum_{a\in{\cal A}}
%{\mathbb I} (a={\cal D}(\bm x) ) Q(\bm x,a)p(\bm x) d\bm x
\nonumber\\[2mm]
&=&\int_{\cal X} Q(\bm x,{\cal D}(\bm x))p(\bm x) d\bm x.
\end{eqnarray}
By definition,
\begin{eqnarray}\nonumber
{\mathbb V}({\cal D}^{\rm opt})-{\mathbb V}({\cal D})
 &=& \int_{\cal X} \{Q(\bm x,{\cal D}^{\rm opt}(\bm x))-Q(\bm x,{\cal D}(\bm x))\}p(\bm x) d\bm x
\nonumber.
\end{eqnarray}
is nonnegative for any policy $\cal D$.
Hence, 
$ %\begin{eqnarray}
{\cal D}^{\rm opt}(\bm x)
%\nonumber.
$ must be the mode of $Q(\bm x, a)$ with respect to $a$ of $\cal A$, %\end{eqnarray}
which concludes \eqref{opt}.
\end{proof}
In accordance, this property \eqref{opt} tells that the optimal policy has a simple construction by the Q-function, from which the Q-learning has exploited with a feasible computation complexity.     
%We give a brief proof for \eqref{opt} as follows.

Let $\cal Q$ be the space of all the Q-functions.
We introduce a basic relation in $\cal Q$ as follows.
\begin{definition}\label{def1}
We say that $Q_0(\bm x,a)$ and $Q_1(\bm x,a)$ are policy-equivalent if there exists a function $\eta(\bm x)>0$ defined on $\cal X$ such that
\begin{eqnarray}\nonumber %\label{d-equiv}
Q_1(\bm x,a)=\eta(\bm x) Q_0(\bm x,a),
\end{eqnarray}
and write $Q_1\sim Q_0$. % in the sense that the decision-equivalence is an equivalent relation. 
\end{definition}
If $Q_1\sim Q_0$, then the  policy functions associated with $Q_0 $ and $Q_1$  are equal.
This is because
\begin{eqnarray}\nonumber
   \argmax_{a\in{\cal A}}Q_1(\bm x,a)=\argmax_{a\in{\cal A}}Q_0(\bm x,a).
\end{eqnarray}
We write $[Q]=\{Q_1\in{\cal Q}: Q_1\sim Q\}$, called the equivalence class. % the set of  Q-functions that are decision-equivalent to
Our main objective is to estimate the optimal policy under a model for the true density function \eqref{under}.
Hence, we do not need to estimate the true Q-function $Q_{\rm true}(\bm x,a)$ itself, but the mode   
 of $Q_{\rm true}(\bm x,a)$ in $a$ of $\cal A$.
%
%\begin{eqnarray}
%{\cal Q}^{\rm opt} =\{Q_1\in{\cal Q}: Q_1\sim Q\}.
%\end{eqnarray}
%where ${\cal Q}$ is the space of all Q-functions.
%As an extreme case, ${\mathbb I}(a={\cal D}^{\rm opt}(\bm x))$ is in ${\cal Q}^{\rm opt}$, where $\mathbb I$ denotes the indicator function.
%Here we like to establish a property for an inform divergence $D(Q_0,Q_1)$ such that $D(Q_0,Q_1)=0$ if $Q_0\sim Q_1$. 

The framework for a multiple stage is straightforward from that for the single stage.
Let $(\bm X_1,A_1,Y_1,...,\bm X_{T},A_T,Y_T)$ be a trajectory for $T$ stage sequence and ${\cal D}=({\cal D}_1,...,{\cal D}_T)$ 
the vector of $T$ policy functions.  
Thus, the $t$-th history vector is written as 
$$\bm H_t=(\bm X_1,A_1,....,\bm X_{t-1},A_{t-1},\bm X_t)$$ 
and
the $t$-th policy function is ${\cal D}_t(\bm H_t)=A_t$.  
Then, the $t$-th Q function is defined by
 \begin{eqnarray}\nonumber
Q_t(\bm h_t,a_t)=\mathbb E\big[Y_t+\max_{a_{t+1}\in{\cal A}_{t+1}}Q_{t+1}(\bm H_{t+1}, a_{t+1})\big|\bm H_t=\bm h_t,A_t=a_t\big].
\end{eqnarray}
In these settings, the policy equivalence relation can be defined and Proposition \ref{prop1} still holds at any stage $t, 1\leq t\leq T$.
%{Policy equivalence relation}

\section{$\gamma$-Power divergence}\label{sec3}

Let us introduce a class of information divergence in the Q-function space that is different from
the probability density functions. %the relation between decision equivalence and 
We define the $\gamma$-power divergence defined on ${\cal Q}\times{\cal Q}$ that is
written by
\begin{eqnarray}\label{gamma}
D_\gamma(Q_0,Q_1)=H_\gamma(Q_0,Q_1)-
H_\gamma(Q_0,Q_0),
%H_\gamma^{\rm c}(Q_0,Q_1)+H_\gamma(Q_0).
\end{eqnarray}
where
\begin{eqnarray}\label{entropy}
H_\gamma(Q_0,Q_1)=
-\frac{1}{\gamma}\mathbb E\Bigg[
\frac{ \sum_{a\in{\cal A}} Q_0(\bm X, a)Q_1(\bm X, a)^{\gamma}}
{\big\{ \sum_{a\in{\cal A}}Q_1(\bm X, a)^{1+\gamma}\big\}^{\frac{\gamma}{1+\gamma}}}
\Bigg],
\end{eqnarray}
called the $\gamma$-power  entropy, cf. Fujisawa \& Eguchi \cite{fujisawa-eguchi2008} and Eguchi \& Kato \cite{eguchi-kato2010}. 
%Eguchi, S., & Kato, S. (2010). Entropy and divergence associated with power function and the statistical application. Entropy, 12(2), 262-274.
It follows from the H\"{o}lder inequality that $D_\gamma(Q_0,Q_1)\geq0$ for all $\gamma\geq0$.
We observe a further property. %In effect, the $\gamma$-power divergence can be defined for any index $\gamma$ of $\mathbb R$ 
%as shown in the following.
\begin{theorem}\label{thm1}
Let $D_\gamma(Q_0,Q_1)$ be the $\gamma$-power divergence defined by \eqref{gamma}.
Then,  for any $\gamma$ of $\mathbb R$,
\begin{eqnarray}\label{nn}
D_\gamma(Q_0,Q_1)\geq0
\end{eqnarray}
 with equality  if and only if $Q_0\sim Q_1$,
where $\sim$ denotes the policy equivalence relation defined in Definition \ref{def1}. % are decision-equivalent.
\end{theorem}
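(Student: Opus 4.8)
The plan is to reduce the global inequality to a statement holding pointwise in $\bm x$ and to recognize the latter as an instance of Jensen's (equivalently H\"older's) inequality. First, simplify the self-term: since $\sum_a Q_0(\bm x,a)^{1+\gamma}\big/\{\sum_a Q_0(\bm x,a)^{1+\gamma}\}^{\gamma/(1+\gamma)}=\{\sum_a Q_0(\bm x,a)^{1+\gamma}\}^{1/(1+\gamma)}$, one may write $D_\gamma(Q_0,Q_1)=\mathbb E[\Delta_\gamma(\bm X)]$ with
\begin{eqnarray}\nonumber
\Delta_\gamma(\bm x)=-\frac1\gamma\left[\frac{\sum_a Q_0(\bm x,a)Q_1(\bm x,a)^{\gamma}}{\{\sum_a Q_1(\bm x,a)^{1+\gamma}\}^{\gamma/(1+\gamma)}}-\Big\{\sum_a Q_0(\bm x,a)^{1+\gamma}\Big\}^{\frac1{1+\gamma}}\right].
\end{eqnarray}
It then suffices to prove $\Delta_\gamma(\bm x)\ge0$ for a.e.\ $\bm x$, with equality precisely when $Q_0(\bm x,\cdot)$ and $Q_1(\bm x,\cdot)$ are proportional, and to take the expectation.

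For fixed $\bm x$, put $u_a=Q_0(\bm x,a)$ and $v_a=Q_1(\bm x,a)$, assumed positive, and introduce the probability vectors $r_a=u_a^{1+\gamma}/\sum_b u_b^{1+\gamma}$ and $w_a=v_a^{1+\gamma}/\sum_b v_b^{1+\gamma}$ on ${\cal A}$ together with $s=\gamma/(1+\gamma)$, noting that $-1/\gamma=(s-1)/s$. A direct computation rewrites
\begin{eqnarray}\nonumber
\Delta_\gamma(\bm x)=\frac{s-1}{s}\Big\{\sum_b u_b^{1+\gamma}\Big\}^{\frac1{1+\gamma}}\left[\sum_a r_a\Big(\frac{w_a}{r_a}\Big)^{s}-1\right].
\end{eqnarray}
The map $t\mapsto t^{s}$ is concave for $s\in(0,1)$, i.e.\ $\gamma>0$, and convex for $s<0$ or $s>1$, i.e.\ $-1<\gamma<0$ or $\gamma<-1$; so Jensen's inequality with weights $r_a$ gives $\sum_a r_a(w_a/r_a)^{s}\le1$ in the concave case and $\ge1$ in the convex case. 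In every regime the bracket therefore has the same sign as $(s-1)/s$, and since $\{\sum_b u_b^{1+\gamma}\}^{1/(1+\gamma)}>0$ we conclude $\Delta_\gamma(\bm x)\ge0$. (The same bound also follows from H\"older's inequality with conjugate exponents $1+\gamma$ and $(1+\gamma)/\gamma$ applied to $\sum_a u_a v_a^{\gamma}$, using the reverse H\"older inequality when $\gamma<0$.) The value $\gamma=0$ is the limit $s\to0$, where $\Delta_\gamma(\bm x)$ tends to $\big(\sum_a Q_0(\bm x,a)\big)$ times the Kullback--Leibler divergence between the $\ell^1$-normalized profiles of $Q_0(\bm x,\cdot)$ and $Q_1(\bm x,\cdot)$, nonnegative by the Gibbs inequality.

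Finally, Jensen holds with equality if and only if $w_a/r_a$ is constant in $a$, that is $v_a^{1+\gamma}\propto u_a^{1+\gamma}$, that is $Q_1(\bm x,a)=\eta(\bm x)Q_0(\bm x,a)$ for all $a$ with some $\eta(\bm x)>0$; hence $\Delta_\gamma(\bm x)=0$ exactly on this set. Since $\Delta_\gamma\ge0$ pointwise, $D_\gamma(Q_0,Q_1)=\mathbb E[\Delta_\gamma(\bm X)]=0$ if and only if $\Delta_\gamma(\bm X)=0$ almost surely, i.e.\ the proportionality holds for a.e.\ $\bm x$; choosing $\eta(\bm x)=Q_1(\bm x,a_0)/Q_0(\bm x,a_0)$ for a fixed reference action $a_0$ exhibits a genuine positive function on ${\cal X}$, which is exactly the relation $Q_0\sim Q_1$ of Definition \ref{def1}. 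The main obstacle is the uniform sign bookkeeping over the regimes $\gamma>0$, $-1<\gamma<0$, $\gamma<-1$, $\gamma=0$ — one must check that the reversal of Jensen/H\"older always matches the sign of $(s-1)/s$ — and the genuinely singular value $\gamma=-1$, at which $\gamma/(1+\gamma)$ is undefined so $D_\gamma$ must be excluded or read as a (degenerate) limit; a secondary technical point is that $Q$-functions are only assumed nonnegative, so one should either impose strict positivity of $Q_0,Q_1$ or treat the states where some coordinate vanishes separately.
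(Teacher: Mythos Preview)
Your proof is correct and follows essentially the same pointwise convexity route as the paper, reducing $D_\gamma$ to an expectation of a nonnegative integrand via the normalized profiles $r_a,w_a$ (the paper's $R_0,R_1$). The only noteworthy difference is cosmetic: the paper bypasses your case analysis over the sign regimes of $\gamma$ by observing that the single function $V(R)=-\tfrac{1}{\gamma}R^{\gamma/(1+\gamma)}$ is convex for \emph{every} $\gamma\neq-1$ (its second derivative is $(1+\gamma)^{-2}R^{-(\gamma+2)/(1+\gamma)}>0$), so the factor $(s-1)/s=-1/\gamma$ is absorbed into $V$ and one tangent-line inequality $V(R_1)-V(R_0)\ge V'(R_0)(R_1-R_0)$, combined with $\sum_a(R_1-R_0)=0$, handles all $\gamma$ at once.
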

\begin{proof}
Define a function as $V(R)=-\frac{1}{\gamma}R^{\frac{\gamma}{1+\gamma}}$
 for $R>0$.
Then, $$\frac{d^2}{dR^2}V(R)=\frac{1}{(1+\gamma)^2}R^{-\frac{\gamma+2}{1+\gamma}}, $$and thus
$V(R)$ is convex in $R$ for any $\gamma\in\mathbb R$.
It can be written that  
\begin{eqnarray}\label{eqn1}
D_\gamma(Q_0,Q_1)=
\int_{\cal X}
\sum_{a\in{\cal A}}Q_0(\bm x, a)\{V(R_1(\bm x,a))-V(R_0(\bm x,a))\}p(\bm x)d\bm x,
\end{eqnarray}
where
$
R_j(\bm x,a)={Q_j(\bm x, a)^{\gamma+1}}/
{ \sum_{a^\prime\in{\cal A}}Q_j(\bm x, a^\prime)^{1+\gamma}}
$
for $j=0,1$.
From the convexity of $V(R)$,
\begin{eqnarray}\label{ineq11}
D_\gamma(Q_0,Q_1)\geq
\int_{\cal X}
\sum_{a\in{\cal A}}Q_0(\bm x, a)\frac{dV(R_0(\bm x,a))}{dR}\{R_1(\bm x,a)-R_0(\bm x,a)\}p(\bm x)d\bm x,
\end{eqnarray}
of which the right side is given by
\begin{eqnarray}\nonumber
\int_{\cal X}\Big\{\sum_{a\in{\cal A}}Q_0(\bm x, a)^{1+\gamma}\Big\}^{\frac{1}{1+\gamma}}
\sum_{a\in{\cal A}}\{R_1(\bm x,a)-R_0(\bm x,a)\}p(\bm x) d\bm x,
\end{eqnarray}
which vanishes since $\sum_{a\in{\cal A}}\{R_1(\bm x,a)-R_0(\bm x,a)\}=0$.
This concludes \eqref{nn}. 

Next, assume that   $Q_0\sim Q_1$. % as defined in \eqref{d-equiv}.
Then, we observe that $R_1(\bm x,a)=R_0(\bm x,a)$, which implies $D_\gamma(Q_0,Q_1)=0$ due to \eqref{eqn1}.
Inversely, if $D_\gamma(Q_0,Q_1)=0$, then it must satisfy $R_1(\bm x,a)=R_0(\bm x,a)$  in \eqref{ineq11} from the convexity of $V(R)$.   
This implies $Q_1(\bm x, a)=\eta^*(\bm x)Q_0(\bm x, a)$, where
\begin{eqnarray}\nonumber
\eta^*(\bm x)=\bigg\{\frac{\sum_{a\in{\cal A}}Q_1(\bm x, a)^{1+\gamma}}
{ \sum_{a\in{\cal A}}Q_0(\bm x, a)^{1+\gamma}}\bigg\}^{\frac{1}{1+\gamma}}.
\end{eqnarray}
That is, $Q_0\sim Q_1$, which completes the proof.
\end{proof}
It is noted that the $\gamma$-power divergence in the probability density function space is not well thought for the negative power index, however
$D_\gamma(Q_0,Q_1)$ is well defined, and the specific choices of $\gamma=-1,-2$ will be discussed later.
Further, the invariance under $\sim$ is explored in the following. 
\begin{corollary}\label{cor1}
If $Q_2\sim Q_1$, then 
\begin{eqnarray}\label{result}
D_\gamma(Q_0,Q_2)=D_\gamma(Q_0,Q_1).
\end{eqnarray}
\end{corollary}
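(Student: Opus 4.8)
The plan is to reduce the statement of Corollary~\ref{cor1} to a direct inspection of the defining formula \eqref{entropy}, exploiting the fact that the $\gamma$-power entropy $H_\gamma(Q_0,Q_1)$ depends on $Q_1$ only through the normalized quantities $R_1(\bm x,a)=Q_1(\bm x,a)^{1+\gamma}/\sum_{a'\in{\cal A}}Q_1(\bm x,a')^{1+\gamma}$ that already appeared in the proof of Theorem~\ref{thm1}. Concretely, if $Q_2\sim Q_1$, then there is a function $\eta(\bm x)>0$ with $Q_2(\bm x,a)=\eta(\bm x)Q_1(\bm x,a)$, and the first step is to observe that $R_2(\bm x,a)=R_1(\bm x,a)$ for all $\bm x,a$, since the factor $\eta(\bm x)^{1+\gamma}$ cancels between numerator and denominator.

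Next I would rewrite $H_\gamma(Q_0,Q_1)$ in terms of $R_1$. Since
\[
\frac{\sum_{a\in{\cal A}}Q_0(\bm x,a)Q_1(\bm x,a)^{\gamma}}{\big\{\sum_{a\in{\cal A}}Q_1(\bm x,a)^{1+\gamma}\big\}^{\frac{\gamma}{1+\gamma}}}
=\Big\{\sum_{a\in{\cal A}}Q_1(\bm x,a)^{1+\gamma}\Big\}^{\frac{1}{1+\gamma}}\sum_{a\in{\cal A}}Q_0(\bm x,a)\,R_1(\bm x,a)^{\frac{\gamma}{1+\gamma}},
\]
one sees directly that, for fixed $Q_0$, the integrand of $H_\gamma(Q_0,Q_1)$ is a function of the pair $\big(\{\sum_a Q_1^{1+\gamma}\}^{1/(1+\gamma)},\,(R_1(\bm x,a))_{a}\big)$ — but in fact it is cleaner to note that the representation \eqref{eqn1} already isolates the $Q_1$-dependence entirely inside $V(R_1(\bm x,a))$. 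So the slick route is: by \eqref{eqn1},
\[
D_\gamma(Q_0,Q_2)-D_\gamma(Q_0,Q_1)
=\int_{\cal X}\sum_{a\in{\cal A}}Q_0(\bm x,a)\{V(R_2(\bm x,a))-V(R_1(\bm x,a))\}\,p(\bm x)\,d\bm x,
\]
and since $R_2=R_1$ pointwise this integral is zero, giving \eqref{result}.

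I would then just double-check the one technical point: formula \eqref{eqn1} from the proof of Theorem~\ref{thm1} expresses $D_\gamma(Q_0,Q_1)$ as $\int\sum_a Q_0(V(R_1)-V(R_0))p\,d\bm x$ with $R_0$ depending only on $Q_0$, so the $R_0$-terms are common to both $D_\gamma(Q_0,Q_2)$ and $D_\gamma(Q_0,Q_1)$ and cancel in the difference — hence only the $R_1$ versus $R_2$ comparison matters, and the invariance of $R_j$ under the policy-equivalence rescaling finishes it. There is essentially no obstacle here; the only thing to be careful about is making sure the representation \eqref{eqn1} is invoked in the form actually derived above (with the $R_0$ term present) rather than misquoting it, and confirming that $\eta(\bm x)>0$ guarantees $\eta(\bm x)^{1+\gamma}$ is well-defined and positive for every real $\gamma$, which it is. Alternatively, one can phrase the whole argument as ``$H_\gamma(Q_0,\cdot)$ is constant on equivalence classes, hence so is $D_\gamma(Q_0,\cdot)$,'' which is arguably the conceptual content of the corollary and a one-line consequence of Theorem~\ref{thm1}'s computation.
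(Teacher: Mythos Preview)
Your argument is correct and essentially the same as the paper's: both proofs observe that the factor $\eta(\bm x)$ cancels between numerator and denominator in the second slot of $H_\gamma$ (equivalently, that $R_2=R_1$), so $H_\gamma(Q_0,\cdot)$---and hence $D_\gamma(Q_0,\cdot)$---is constant on the equivalence class. The only cosmetic difference is that the paper carries out the cancellation directly in the cross-entropy integrand \eqref{entropy}, whereas you route it through the $R_j$-representation \eqref{eqn1} from the proof of Theorem~\ref{thm1}.
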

\begin{proof}
From the assumption, there exists a function $\eta(\bm x)$ such that $Q_2(\bm x,a)=\eta(\bm x)Q_1(\bm x,a)$.
Hence, we observe that
\begin{eqnarray}\label{eq1}
D_\gamma(Q_0,Q_2)=
-\frac{1}{\gamma}\int_{\cal X}
\frac{ \sum_{a\in{\cal A}} Q_0(\bm x, a)\{\eta(\bm x)Q_1(\bm x, a)\}^{\gamma}}
{\{ \sum_{a\in{\cal A}}\{\eta(\bm x)Q_1(\bm x, a)\}^{1+\gamma}\}^{\frac{\gamma}{1+\gamma}}}
p(\bm x)d\bm x-H_\gamma(Q_0,Q_0).
\end{eqnarray}
It is noted that two $\eta(\bm x)$'s in the first term of the right side of \eqref{eq1} are canceled in the integral sign. 
This concludes \eqref{result}.

\end{proof}
We remark from Theorem \ref{thm1} that $D_\gamma(Q_0,Q_1)$ does not satisfy the distinguishability over $\cal Q$, but
satisfies the distinguishability over ${\cal Q}/\sim$.
In this sense, the $\gamma$-power divergence can be more efficiently employed for estimating the optimal policy function than $U$-divergence, which will be given in a subsequent discussion.  
It is noted that if $\gamma$ is taken as a limit to $0$, then 
\begin{eqnarray}\label{nKL}
\hspace{-5mm}\lim_{\gamma\rightarrow0}D_\gamma(Q_0,Q_1) =
 \sum_{a\in{\cal A}}\mathbb E\bigg[Q_0(\bm X,a)\Big\{\log \frac{Q_0(\bm X,a)}{\sum_{a'\in{\cal A}}Q_0(\bm X,a')}-\log \frac{Q_1(\bm X,a)}{\sum_{a'\in{\cal A}}Q_1(\bm X,a')}\Big\}\bigg],
\end{eqnarray}
say $D_{\rm nKL}$, which is the normalized KL divergence that is different from  the extended KL divergence. 
We note that equation \eqref{nKL} is derived to be the limit of \eqref{eqn1} as $\gamma$ goes to $0$.

% $D_{\rm eKL}(Q_0,Q_1)$ defined in \eqref{eKL}. %on $\cal Q$.

Furthermore, the $\gamma$-power divergence equips with a remarkable relation to the value function in the limiting for $\gamma$ to $\infty$ as follows.
\begin{theorem}\label{thm2}%proposition}
Let $\mathbb V_0$ be a value function generated by a Q-function $Q_0$ such that
\begin{eqnarray}\label{V2Q}
 \mathbb V_0({\cal D})=\mathbb E[ Q_0(\bm X, {\cal D}(\bm X))].
%\int_{\cal X}\sum_{a\in{\cal A}}Q_0(\bm x,a)\{{\mathbb I}(a={\cal D}_0(\bm x))-{\mathbb I}(a={\cal D}_1(\bm x))\}p(\bm x)d\bm x
\end{eqnarray} 
Then, 
\begin{eqnarray}\label{Value}
\lim_{\gamma\rightarrow\infty}\gamma\>D_\gamma(Q_0,Q_1)= \mathbb V_0({\cal D}_0) - \mathbb V_0({\cal D}_1)
%\int_{\cal X}\sum_{a\in{\cal A}}Q_0(\bm x,a)\{{\mathbb I}(a={\cal D}_0(\bm x))-{\mathbb I}(a={\cal D}_1(\bm x))\}p(\bm x)d\bm x
\end{eqnarray}
where ${\cal D}_j(\bm x)=\argmax_{a\in{\cal A}}Q_j(\bm x,a)$ for $j=0,1$.
\end{theorem}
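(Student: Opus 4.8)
The plan is to split $\gamma D_\gamma(Q_0,Q_1)=\gamma H_\gamma(Q_0,Q_1)-\gamma H_\gamma(Q_0,Q_0)$ and show that the two terms converge, as $\gamma\to\infty$, to $-\mathbb V_0({\cal D}_1)$ and $-\mathbb V_0({\cal D}_0)$ respectively; subtracting then gives \eqref{Value}. Throughout I would assume, as is standard, that for $j=0,1$ the set $\argmax_{a\in{\cal A}}Q_j(\bm x,a)$ is a singleton for $p$-almost every $\bm x$, and that the Q-functions are strictly positive so that the powers below are well defined.

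First I would rewrite, with $M_j(\bm x)=\max_{a\in{\cal A}}Q_j(\bm x,a)$,
\[
\gamma H_\gamma(Q_0,Q_1)=-\mathbb E\Bigg[\frac{\sum_{a\in{\cal A}}Q_0(\bm X,a)\big(Q_1(\bm X,a)/M_1(\bm X)\big)^{\gamma}}{\big\{\sum_{a\in{\cal A}}\big(Q_1(\bm X,a)/M_1(\bm X)\big)^{1+\gamma}\big\}^{\frac{\gamma}{1+\gamma}}}\Bigg],
\]
the factors $M_1(\bm X)^{\gamma}$ in numerator and denominator having cancelled. For a.e.\ fixed $\bm x$ the ratio $Q_1(\bm x,a)/M_1(\bm x)$ equals $1$ at $a={\cal D}_1(\bm x)$ and is strictly smaller than $1$ otherwise, so $\big(Q_1(\bm x,a)/M_1(\bm x)\big)^{\gamma}\to\mathbb I\big(a={\cal D}_1(\bm x)\big)$; hence the numerator tends to $Q_0(\bm x,{\cal D}_1(\bm x))$ and the denominator (base $\to1$, exponent $\to1$) tends to $1$, so the integrand converges pointwise to $Q_0(\bm x,{\cal D}_1(\bm x))$. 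Since $Q_1(\bm x,a)/M_1(\bm x)\le1$ and the denominator is $\ge1$, the integrand is dominated by $\sum_{a\in{\cal A}}Q_0(\bm X,a)$ for every $\gamma\ge0$, which is integrable under the regularity/moment conditions already in force; dominated convergence then gives $\gamma H_\gamma(Q_0,Q_1)\to-\mathbb E[Q_0(\bm X,{\cal D}_1(\bm X))]=-\mathbb V_0({\cal D}_1)$ by \eqref{V2Q}.

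The term $\gamma H_\gamma(Q_0,Q_0)$ is handled in the same way: it equals $-\mathbb E\big[\{\sum_{a\in{\cal A}}Q_0(\bm X,a)^{1+\gamma}\}^{1/(1+\gamma)}\big]$, and by the elementary $\ell^{1+\gamma}\to\ell^{\infty}$ convergence on the finite set ${\cal A}$, together with the domination $\{\sum_{a}Q_0(\bm X,a)^{1+\gamma}\}^{1/(1+\gamma)}\le|{\cal A}|\sum_{a\in{\cal A}}Q_0(\bm X,a)$ valid for $\gamma\ge0$, it tends to $-\mathbb E[\max_{a\in{\cal A}}Q_0(\bm X,a)]=-\mathbb V_0({\cal D}_0)$. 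Subtracting the two limits gives $\lim_{\gamma\to\infty}\gamma D_\gamma(Q_0,Q_1)=\mathbb V_0({\cal D}_0)-\mathbb V_0({\cal D}_1)$.

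The only delicate point — the expected main obstacle — is the treatment of the ``tie set'' $\{\bm x:\argmax_{a\in{\cal A}}Q_j(\bm x,\cdot)\text{ is not a singleton}\}$, on which $\big(Q_j(\bm x,a)/M_j(\bm x)\big)^{\gamma}$ converges to the indicator of the whole $\argmax$ set, so that the pointwise limit of the integrand becomes an average of $Q_0(\bm x,\cdot)$ over that set rather than a single value $Q_0(\bm x,{\cal D}_j(\bm x))$. Assuming this set is $p$-null (which also makes ${\cal D}_j$ and $\mathbb V_0({\cal D}_j)$ unambiguous) removes the discrepancy, and everything else in the argument is routine verification of pointwise limits and the dominated-convergence bounds.
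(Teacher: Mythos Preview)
Your proof is correct and follows essentially the same route as the paper's own argument: both split $\gamma D_\gamma$ into the two cross-entropy pieces, divide numerator and denominator by the $\gamma$-th power of the max $M_1(\bm x)=Q_1(\bm x,{\cal D}_1(\bm x))$, and use $(Q_1/M_1)^\gamma\to\mathbb I(a={\cal D}_1(\bm x))$ to obtain $-\mathbb V_0({\cal D}_1)$ and $-\mathbb V_0({\cal D}_0)$. If anything you are more careful than the paper, supplying the dominated-convergence bound and flagging the tie-set assumption that the paper leaves implicit.
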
%proposition}
\begin{proof}
We observe that the $\gamma$-power cross entropy satisfies
\begin{eqnarray}\nonumber
\lim_{\gamma\rightarrow\infty}\gamma\> H_\gamma (Q_0,Q_1)=
-\lim_{\gamma\rightarrow\infty} \int_{\cal X}
\frac{ \sum_{a\in{\cal A}} Q_0(\bm x, a)\Big\{\displaystyle\frac{Q_1(\bm x, a)}{Q_1(\bm x, {\cal D}_1(\bm x))}\Big\}^{\gamma}}
{\Big[ \sum_{a\in{\cal A}}\Big\{\displaystyle\frac{Q_1(\bm x, a)}{Q_1(\bm x, {\cal D}_1(\bm x))}\Big\}^{1+\gamma}\Big]^{\frac{\gamma}{1+\gamma}}}
\>p(\bm x)d\bm x
\end{eqnarray}
which is equal to
$%\begin{eqnarray}
- \int_{\cal X}
 \sum_{a\in{\cal A}} Q_0(\bm x, a){\mathbb I}(a= {\cal D}_1(\bm x))p(\bm x)d\bm x
.$ %\end{eqnarray}
This is written as
\begin{eqnarray}\nonumber
- \int_{\cal X}Q_0(\bm x,{\cal D}_1(\bm x))p(\bm x)d\bm x
%- \mathbb E\Big[ \frac{Y}{p_0(A|\bm X)}{\mathbb I}(A= {\cal D}_1(\bm X))\Big], %p(\bm x)dy d\bm x
\end{eqnarray}
which is nothing but $-\mathbb V_0({\cal D}_1)$ due to \eqref{V2Q}.
Similarly, we observe
\begin{eqnarray}\nonumber
\lim_{\gamma\rightarrow\infty}\gamma\> H_\gamma(Q_0,Q_0)=-\mathbb V_0({\cal D}_0),
\end{eqnarray}
 which concludes \eqref{Value}.

\end{proof}

%From Proposition \ref{prop1},  

The equation \eqref{Value} is an ultimate form that depends only on two policies ${\cal D}_0$ and ${\cal D}_1$ rather than a form depending on $Q_0$ and $Q_1$. 
Theorem \ref{thm2} directly supports  that  ${\cal D}_1$ is equal to 
the optimal policy ${\cal D}^{\rm opt}$ in \eqref{opt} with respect to the value function ${\mathbb V}_0$ since $${\mathbb V}_0({\cal D}_1)\geq {\mathbb V}_0({\cal D}_2)$$ for any ${\cal D}_2$.
 %It is worth to note due to Theorem \eqref{thm2} that 
Consequently, the minimum $\gamma$-power divergence is equivalent to the maximum value function in this limiting sense.

Next, we have a look at a close relationship of the geometric mean of Q-functions taking another limit of $\gamma$.
  
\begin{theorem}\label{thm3}
\begin{eqnarray}\nonumber
 \lim_{\gamma\rightarrow-1}\ m^{\frac{1}{1+\gamma}}D_\gamma(Q_0,Q_1) \label{GM}
=
\mathbb E\bigg[\frac{1}{m}\sum_{a\in{\cal A}}\frac{Q_0(\bm X,a)}{Q_1(\bm X,a)}{\rm GM}_{Q_1}(\bm X)
-{\rm GM}_{Q_0}(\bm X)\bigg],
\end{eqnarray}
where $m$ is the cardinal number of $\cal A$ and 
$ %\begin{eqnarray}
{\rm GM}_{Q}(\bm x)=\prod_{a\in{\cal A}}Q(\bm x,a)^{\frac{1}{m}}.
$%\end{eqnarray}

\end{theorem}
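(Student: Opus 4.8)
The plan is to reduce the claim to the elementary power-mean limit
\[
\lim_{r\to 0}\Big(\tfrac{1}{m}\textstyle\sum_{a\in{\cal A}}c_a^{r}\Big)^{1/r}=\prod_{a\in{\cal A}}c_a^{1/m}\qquad(c_a>0),
\]
that is, the convergence of the $r$-th power mean to the geometric mean, applied pointwise in $\bm x$ with $c_a=Q_j(\bm x,a)$ and $r=1+\gamma$, and then to combine it with the representation \eqref{eqn1} of $D_\gamma$. Recall from \eqref{eqn1} that $D_\gamma(Q_0,Q_1)=\mathbb E\big[\sum_{a}Q_0(\bm X,a)\{V(R_1(\bm X,a))-V(R_0(\bm X,a))\}\big]$, where $V(R)=-\tfrac1\gamma R^{\gamma/(1+\gamma)}$ and $R_j(\bm x,a)=Q_j(\bm x,a)^{1+\gamma}/\sum_{a'}Q_j(\bm x,a')^{1+\gamma}$. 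Setting $\epsilon=1+\gamma$, so that $V(R)=\tfrac1{1-\epsilon}R^{1-1/\epsilon}$, the computation becomes an asymptotic expansion in small $\epsilon$.

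First I would expand the $R_j$. From $Q_j(\bm x,a)^{\epsilon}=1+\epsilon\log Q_j(\bm x,a)+O(\epsilon^2)$ one gets $\sum_{a'}Q_j(\bm x,a')^{\epsilon}=m\big(1+\epsilon\log{\rm GM}_{Q_j}(\bm x)+O(\epsilon^2)\big)$, hence
\[
R_j(\bm x,a)=\tfrac1m\Big(1+\epsilon\log\tfrac{Q_j(\bm x,a)}{{\rm GM}_{Q_j}(\bm x)}+O(\epsilon^2)\Big).
\]
Raising this to the power $1-1/\epsilon$, the prefactor $(1/m)^{1-1/\epsilon}$ carries the whole singular part in $\epsilon$, while the remaining bracket tends to ${\rm GM}_{Q_j}(\bm x)/Q_j(\bm x,a)$; this is exactly the power-mean limit above in disguise, and together with $\tfrac1{1-\epsilon}\to1$ it gives $V(R_j(\bm x,a))\sim (1/m)^{1-1/\epsilon}\,{\rm GM}_{Q_j}(\bm x)/Q_j(\bm x,a)$. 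Substituting into the representation and summing over $a$, the $j=0$ term collapses because $\sum_{a}Q_0(\bm x,a)/Q_0(\bm x,a)=m$, and one is left with a single power of $m$ multiplying $\tfrac1m{\rm GM}_{Q_1}(\bm x)\sum_{a}Q_0(\bm x,a)/Q_1(\bm x,a)-{\rm GM}_{Q_0}(\bm x)$. The normalising power $m^{1/(1+\gamma)}$ in the statement is precisely what cancels this emergent singular power of $m$, so that taking the expectation over $\bm X$ yields the asserted right-hand side.

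The only nonroutine point is interchanging the limit $\epsilon\to 0$ with $\mathbb E[\cdot]$; I would justify it by dominated convergence under the standing regularity that $Q_0,Q_1$ are bounded and bounded away from $0$ on the essential support of $\bm X$, which makes the $O(\epsilon^2)$ remainders above uniform in $(\bm x,a)$. A parallel, representation-free route works directly on \eqref{entropy}: the same expansions give $\{\sum_{a}Q_1(\bm x,a)^{1+\gamma}\}^{\gamma/(1+\gamma)}\sim m^{1-1/\epsilon}{\rm GM}_{Q_1}(\bm x)^{-1}$ and $\sum_{a}Q_0(\bm x,a)Q_1(\bm x,a)^{\gamma}\to\sum_{a}Q_0(\bm x,a)/Q_1(\bm x,a)$, and with $-1/\gamma\to1$ these combine into the limiting value of $H_\gamma(Q_0,Q_1)$; the term $H_\gamma(Q_0,Q_0)$ is the same computation with ${\rm GM}_{Q_0}$ and the $\sum_{a}Q_0/Q_0=m$ simplification, and subtracting gives the claimed limit. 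As a consistency check, AM-GM yields $\tfrac1m\sum_{a}Q_0(\bm x,a)/Q_1(\bm x,a)\ge {\rm GM}_{Q_0}(\bm x)/{\rm GM}_{Q_1}(\bm x)$, so the right-hand side is nonnegative and vanishes iff $Q_0\sim Q_1$, matching Theorem \ref{thm1}. I expect the bookkeeping of which factor supplies the singular $m^{1/(1+\gamma)}$, i.e. the delicate cancellation of two diverging quantities, to be the main obstacle, while the uniform-integrability step is standard.
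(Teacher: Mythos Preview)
Your approach is essentially the paper's: both reduce the claim to the power-mean limit $(\tfrac1m\sum_a c_a^{r})^{1/r}\to\prod_a c_a^{1/m}$ as $r=1+\gamma\to0$, and your ``representation-free route'' on $H_\gamma$ is the paper's computation almost verbatim (they apply L'H\^opital to $\tfrac{1}{1+\gamma}\log\tfrac1m\sum_a Q^{1+\gamma}$ where you Taylor-expand in $\epsilon$), while your primary route via \eqref{eqn1} is just a repackaging of the same limit.

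One bookkeeping point you wave past: your own expansion yields $D_\gamma\sim m^{1/\epsilon}\times(\text{claimed RHS})$, so the normaliser that ``precisely cancels'' is $m^{-1/(1+\gamma)}$, not $m^{+1/(1+\gamma)}$; the paper's first displayed line in its proof is likewise only consistent with the minus sign, so this is a typo in the theorem statement rather than a flaw in either argument.
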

\begin{proof}
We observe 
\begin{eqnarray}\nonumber
            \lim_{\gamma\rightarrow -1}m^{\frac{1}{1+\gamma}}H_\gamma(Q_0,Q_0)
&=&\lim_{\gamma\rightarrow -1}
\int_{\cal X}{\exp \Big\{{\frac{1}{\gamma+1}}\log\frac{1}{m}\sum_{a\in{\cal A}} {Q_0(\bm x,a)}^{\gamma+1}\Big\}}p(\bm x)d\bm x\nonumber\\[3mm]
&=&\lim_{\gamma\rightarrow -1}
\int_{\cal X}{\exp \Big\{\frac{\frac{1}{m}\sum_{a\in{\cal A}} {Q_0(\bm x,a)}^{\gamma+1}\log Q_0(\bm x,a)}{\frac{1}{m}\sum_{a\in{\cal A}} {Q_0(\bm x,a)}^{\gamma+1}}\Big\}}p(\bm x)d\bm x
\nonumber\\[3mm]
&=& 
\int_{\cal X}{\exp \Big\{ {\frac{1}{m}\sum_{a\in{\cal A}}\log Q_0(\bm x,a) \Big\}}}p(\bm x)d\bm x
\nonumber\\[.6mm]\nonumber %\label{amgm}
&=& 
 \int_{\cal X}\Big(\prod_{a\in{\cal A}}  Q_0(\bm x,a)\Big)^{\frac{1}{m}}p(\bm x)d\bm x.
\end{eqnarray}
Similarly,
\begin{eqnarray}\nonumber
            \lim_{\gamma\rightarrow -1}m^{\frac{1}{1+\gamma}}H_\gamma (Q_0,Q_1)
= 
  \int_{\cal X}\sum_{a\in{\cal A}}\frac{Q_0(\bm x,a)}{Q_1(\bm x,a)}\Big(\prod_{a\in{\cal A}}  Q_0(\bm x,a)\Big)^{\frac{1}{m}}p(\bm x)d\bm x.
\end{eqnarray}
Therefore, we conclude \eqref{GM}.
\end{proof}

We remark that the well-known inequality between the arithmetic and geometric means implies 
\begin{eqnarray}\nonumber
\mathbb E\bigg[\frac{1}{m}\sum_{a\in{\cal A}}\frac{Q_0(\bm X,a)}{Q_1(\bm X,a)}\bigg] \geq\mathbb E\bigg[ \frac{{\rm GM}_{Q_0}(\bm X)}{{\rm GM}_{Q_1}(\bm X)}\bigg],
\end{eqnarray}
which is closely related to the limit in Theorem \eqref{thm3}.

Next, let us consider an  interesting relation of the $\gamma$-power entropy with the harmonic mean.
\begin{theorem}\label{thm4}
When $\gamma=-2$, then the diagonal $\gamma$-power entropy is equal to the expected harmonic mean of the Q-function,
that is 
\begin{eqnarray} \label{g2hm}
H_\gamma(Q,Q)
=
\frac{m}{2}\mathbb E\big[{\rm HM}_{Q}(\bm X)\big],
\end{eqnarray}
where
\begin{eqnarray}\nonumber
{\rm HM}_{Q}(\bm x)=\Big[m\sum_{a\in{\cal A}}\frac{1}{Q(\bm x,a)}\Big]^{-1}.
\end{eqnarray}

\end{theorem}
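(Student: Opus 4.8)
This is a direct evaluation of the defining formula \eqref{entropy} at the special index $\gamma=-2$, so the plan is essentially to substitute and simplify, keeping track of the exponents. First I would set $Q_0=Q_1=Q$ and $\gamma=-2$ in \eqref{entropy}. The prefactor becomes $-\frac{1}{\gamma}=\frac12$. In the numerator, $Q(\bm X,a)\,Q(\bm X,a)^{\gamma}=Q(\bm X,a)^{1+\gamma}=Q(\bm X,a)^{-1}$, so $\sum_{a\in{\cal A}}Q_0(\bm X,a)Q_1(\bm X,a)^{\gamma}$ reduces to $\sum_{a\in{\cal A}}Q(\bm X,a)^{-1}$. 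In the denominator, $1+\gamma=-1$ and the outer exponent is $\frac{\gamma}{1+\gamma}=\frac{-2}{-1}=2$, so $\big\{\sum_{a\in{\cal A}}Q(\bm X,a)^{1+\gamma}\big\}^{\gamma/(1+\gamma)}=\big\{\sum_{a\in{\cal A}}Q(\bm X,a)^{-1}\big\}^{2}$.

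Combining these, the ratio inside the expectation collapses to
\begin{eqnarray}\nonumber
\frac{\sum_{a\in{\cal A}}Q(\bm X,a)^{-1}}{\big\{\sum_{a\in{\cal A}}Q(\bm X,a)^{-1}\big\}^{2}}
=\frac{1}{\sum_{a\in{\cal A}}Q(\bm X,a)^{-1}},
\end{eqnarray}
so that $H_{-2}(Q,Q)=\frac12\,\mathbb E\big[\big(\sum_{a\in{\cal A}}Q(\bm X,a)^{-1}\big)^{-1}\big]$. The final step is to recognize this reciprocal-sum as a harmonic mean: by the definition ${\rm HM}_{Q}(\bm x)=\big[m\sum_{a\in{\cal A}}Q(\bm x,a)^{-1}\big]^{-1}$, we have $\big(\sum_{a\in{\cal A}}Q(\bm x,a)^{-1}\big)^{-1}=m\,{\rm HM}_{Q}(\bm x)$, and substituting gives $H_{-2}(Q,Q)=\frac{m}{2}\,\mathbb E[{\rm HM}_{Q}(\bm X)]$, which is \eqref{g2hm}.

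There is no real obstacle here beyond bookkeeping of signs and exponents; the one point worth a remark is well-definedness. The formula involves $Q(\bm x,a)^{-1}$, so the computation tacitly requires $Q(\bm x,a)>0$ (not merely $\ge 0$) for almost every $\bm x$ and every $a\in{\cal A}$; I would state this positivity as a standing assumption, consistent with the earlier remark that $D_\gamma$ remains meaningful for negative $\gamma$ in the Q-function space. Under that assumption every quantity above is finite and positive and the chain of equalities is valid.
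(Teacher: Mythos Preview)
Your proof is correct and follows exactly the same route as the paper: substitute $\gamma=-2$ into the defining formula \eqref{entropy}, simplify the exponents, and identify the reciprocal sum with the harmonic mean. The paper's own proof is even terser (it just records the general cross-entropy $H_{-2}(Q_0,Q_1)$ and declares the conclusion), so your explicit bookkeeping of the exponents and the positivity remark are, if anything, an improvement in clarity.
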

\begin{proof}
By definition, when $\gamma=-2$, 
\begin{eqnarray}\nonumber
H_\gamma(Q_0,Q_1)=
\frac{1}{2}\int_{\cal X}
\frac{ \sum_{a\in{\cal A}} Q_0(\bm x, a)Q_1(\bm x, a)^{-2}}
{\{ \sum_{a\in{\cal A}}Q_1(\bm x, a)^{-1}\}^{2}}
p(\bm x)d\bm x
\end{eqnarray}
from \eqref{entropy}.  This concludes \eqref{g2hm}.
\end{proof}
Consider an  inequality associated with the arithmetic and harmonic means as 
\begin{eqnarray}\label{prototype}
   \sum_{a\in{\cal A}}\frac{Q_0(\bm x,a)}{Q_1(\bm x,a)}w(Q_1(\bm x,a))\geq 
\Big[\sum_{a\in{\cal A}}\frac{Q_1(\bm x,a)}{Q_0(\bm x,a)}w(Q_1(\bm x,a))\Big]^{-1}
\end{eqnarray}
for any weight function $w(Q_1(\bm x,a))$ satisfying $\sum_{a\in{\cal A}} w(Q_1(\bm x,a))=1$.  If  we fix  as
\begin{eqnarray}\label{weight1}
  w(Q_1(\bm x,a))=\frac{1}{Q_1(\bm x,a)}\Big(\sum_{a'\in{\cal A}} \frac{1}{Q_1(\bm x,a')}\Big)^{-1},
\end{eqnarray}
then  the inequality \eqref{prototype} reduces to
\begin{eqnarray}\nonumber
 \Big(  \sum_{a\in{\cal A}} \frac{Q_0(\bm x,a)}{Q_1(\bm x,a)^2}\Big)\Big(\sum_{a\in{\cal A}}  \frac{1}{Q_1(\bm x,a)}\Big)^{-1}
\geq \Big(\sum_{a\in{\cal A}}  \frac{1}{Q_0(\bm x,a)}\Big)^{-1}\Big(\sum_{a\in{\cal A}}  \frac{1}{Q_1(\bm x,a)}\Big),
\end{eqnarray}
or equivalently,
\begin{eqnarray}\label{ineq}
 \Big( \sum_{a\in{\cal A}} \frac{Q_0(\bm x,a)}{Q_1(\bm x,a)^2}\Big)\Big(\sum_{a\in{\cal A}}  \frac{1}{Q_1(\bm x,a)}\Big)^{-2}
\geq \Big(\sum_{a\in{\cal A}} \frac{1}{Q_0(\bm x,a)}\Big)^{-1}.
\end{eqnarray}
Note that the weight function $w(Q_1(\bm x,a))$ in \eqref{weight1} is essential, which is viewed as a harmonic weight.
%The entropy is given by the harmonic mean (HM) as 
% \begin{eqnarray}
%H_{\rm HM}(p(\cdot|x))= 
% \Big(\sum_{y=1}^{m}\frac{1}{p(y|x)}\Big)^{-1};
%\end{eqnarray}
In summary, when $\gamma=-2$, then the $\gamma$-power divergence 
is equivalent to the expectation for both sides in the inequality \eqref{ineq}, that is
\begin{eqnarray}\nonumber %\label{ineq}
 D_\gamma(Q_0,Q_1)=\frac{m}{2} \mathbb E\Big[ \sum_{a\in{\cal A}} \frac{Q_0(\bm X,a)}{Q_1(\bm X,a)^2}{\rm HM}_{Q_1}(\bm X)^{2}\Big]
-\frac{m}{2} \mathbb E\Big[ {\rm HM}_{Q_0}(\bm X)\Big].
\end{eqnarray}
Following these, the class of $\gamma$-power divergence is associated with the arithmetic, geometric and harmonic means of Q-functions.

\section{$\beta$-Power divergence}\label{sec4}

We now consider $U$-divergence in the Q-function space $\cal Q$.    
Let $U$ be a strictly increasing and convex function defined on $\mathbb R$.
Then,  an information divergence defined on ${\cal Q}\times{\cal Q}$ % as a generator function, the   
is introduced as
\begin{eqnarray}\nonumber
D_{U}(Q_0,Q_1)=H_U(Q_0,Q_1)-H_U(Q_0,Q_0),
\end{eqnarray}
called $U$-divergence, where 
\begin{eqnarray}\nonumber
H_{U}(Q_0,Q_1)=\sum_{a\in{\cal A}}\mathbb E \big[ U(u^{-1}(Q_1(\bm X,a))-Q_0(\bm X,a){u^{-1}(Q_1(\bm X,a))}\big]
\end{eqnarray}
with $u=U^\prime$.  
We note from the convexity of $U$ that 
\begin{eqnarray}\nonumber U(u^{-1}(q_1))-U(u^{-1}(q_0))-q_0\{{u^{-1}(q_1)}-{u^{-1}(q_0)}\}\geq0
\end{eqnarray}
for any nonnegative numbers $q_0$ and $q_1$ and the equality holds if and only if $q_0=q_1$. This implies that $D_U(Q_0,Q_1)$
is well-defined as an information divergence in $\cal Q$, that is, $D_U(Q_0,Q_1)\geq0$ with equality if and only if $Q_0=Q_1$.  Let
\begin{eqnarray}\nonumber
U(t)=\frac{(1+\beta t)^{\frac{\beta+1}{\beta}}}{\beta+1}
\end{eqnarray}
with a power index $\beta\in\mathbb R$.  Then, as a typical example of $U$-divergence,  the $\beta$-power divergence is introduced as
\begin{eqnarray}\nonumber
D_{\beta}(Q_0,Q_1)=H_{\beta}(Q_0,Q_1)-H_{\beta}(Q_0,Q_0),
\end{eqnarray}
%the $\beta$-power divergence $D_{\beta}(Q_0,Q_1)$ is given by
where
\begin{eqnarray}\nonumber
H_{\beta}(Q_0,Q_1)=\sum_{a\in{\cal A}}\mathbb E\Big[ \frac{Q_1(\bm X,a)^{\beta+1}}{\beta+1}
-\frac{Q_0(\bm X,a)Q_1(\bm X,a)^\beta}{\beta}\Big].
\end{eqnarray}
Unfortunately, the $U$-divergence does not satisfy the property of the policy equivalence that is satisfied by the $\gamma$-power divergence.
Assume $Q_0\sim Q_1$, that is, there exists $\eta(\bm x)$ such that  $Q_1(\bm x,a)=\eta(\bm x) Q_0(\bm x,a)$.
Then,  for example, we observe in the $\beta$-power divergence that 
\begin{eqnarray}\nonumber
D_\beta(Q_0,Q_1)=\sum_{a\in{\cal A}}\mathbb E\Big[Q_0(\bm X,a)^{\beta+1} 
\Big\{ \frac{\eta(\bm X)^{\beta+1}-1}{\beta+1}
-\frac{\eta(\bm X)^\beta-1}{\beta}\Big\}\Big]
\end{eqnarray}
 which does not always vanish.

\section{Estimation for the optimal policy}\label{sec5}

\subsection{Minimum $\gamma$-power divergence}

We consider the estimation for optimal policy employing the $\gamma$-power divergence.
For this, we assume a semiparametric model for the Q-function as
\begin{eqnarray}\label{Q-model}
{\cal M}=\{ Q(\bm x, a, \bm\psi):=\exp\{f(\bm x)+g(\bm x,a,\bm\psi)\}: f\in{\cal F},\psi\in{\Psi}\},
\end{eqnarray} 
where $f(\bm x)$ is a nonparametric component; $g(\bm x,a,\bm\psi)$ is a parametric component with
a parameter vector $\bm\psi$.
Thus, the component $f(\bm x)$ models the part of the conditional expectation of the outcome $Y$ given only the covariate $\bm x$ that is independent of the conditioning for the treatment $A=a$.
On the other hand, the component  $g(\bm x,a,\bm\psi)$ models the important part depending on both
conditionings of the covariate $\bm X=\bm x$ and  the treatment $A=a$.
In effect, the optimal policy is derived only by the parametric component  as
\begin{eqnarray}
{\cal D}^{\rm opt}(\bm x,\bm\psi) =\underset{a\in{\cal A}}{\rm argmax}\ g(\bm x,a,\bm\psi).
\end{eqnarray} 
In this sense, the nonparametric component $f(\bm x)$ is viewed as a nuisance function.

For a dataset $\{(\bm X_i,A_i,Y_i) :i=1,...,n\}$ from the model $\cal M$ %with unknown parameter $\bm\psi$.
 the $\gamma$-power  loss function for $\bm\psi$ is defined by
\begin{eqnarray}\label{gamma-loss}
L_\gamma({\bm\psi})=- \frac{1}{\gamma}\frac{1}{n}\sum_{i=1}^n
\frac{Y_i}{p(A_i|\bm X_i)}
\displaystyle \frac{\exp\{\gamma g(\bm X_i,A_i,{\bm\psi})\}}{
\{\sum_{a \in{\cal A}}\exp\{(\gamma+1) g(\bm X_i,a,{\bm\psi})\}\}^{\frac{\gamma}{1+\gamma}}},
\end{eqnarray}
where the conditional treatment probability $p(a|\bm x)$ is assumed to be a known probability function of $A$ given $\bm X=\bm x$.
In a case that $p(a|\bm x)$ is unknown, it is usually estimated by modeling in a parametric manner, for example, a logistic model based on $\{(\bm X_i,A_i):i=1,...,n\}$.
Cf. Henmi \& Eguchi \cite{henmi2004} for a paradoxical aspect, and Wallace \& Moody \cite{wallace2015} for the
inverse probability weighing and the balancing condition.
We note that the loss function \eqref{gamma-loss} is viewed as the outcome weighted form,
which is nicely discussed with the suport vector machine in Zhao et al. \cite{zhao2015}.
%Zhao, Y. Q., Zeng, D., Laber, E. B., & Kosorok, M. R. (2015). New statistical learning methods for estimating optimal dynamic treatment regimes. Journal of the American Statistical Association, 110(510), 583-598.

The proposed estimator is given by
\begin{eqnarray}\label{psi-est}
\hat{\bm\psi}_\gamma= \argmin_{\bm\psi\in\Psi}L_\gamma({\bm\psi}),
\end{eqnarray}
called the minimum $\gamma$-power divergence estimator ($\gamma$-MDE).
The estimated policy function for a given covariate $\bm x$ is provided  by 
$$\hat{\cal D}_\gamma(\bm x)=\argmax_{a\in{\cal A}}g(\bm x,a,{\hat{\bm\psi}_\gamma})$$
via plugged-in the $\gamma$-MDE $\hat{\bm\psi}_\gamma$ to $\bm\psi$.

\begin{theorem}\label{thm5}
Let $\{(\bm X_i,A_i,Y_i) :i=1,...,n\}$ be a random sequence with the Q-function of $\cal M$ with unknown parameter 
$\bm\psi_0$ and nuisance function $f(\bm x)$. %w
Then, the $\gamma$-MDE $\hat{\bm\psi}_\gamma$ defined in  \eqref{psi-est}
is consistent with $\bm\psi_0$ irrelevant to  $f(\bm x)$. 

\end{theorem}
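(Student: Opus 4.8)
The plan is to run the standard M-estimation (Wald-type) consistency argument, the decisive structural input being the policy-equivalence property of $D_\gamma$ in Theorem \ref{thm1}, which is exactly what kills the nuisance. First I would identify the population limit of the loss \eqref{gamma-loss}. Writing $Q_{\rm true}(\bm x,a)=\exp\{f(\bm x)+g(\bm x,a,\bm\psi_0)\}$ for the true Q-function of ${\cal M}$, conditioning on $(\bm X,A)$, using $\mathbb E[Y\,|\,\bm X,A]=Q_{\rm true}(\bm X,A)$, and observing that the inverse weight $p(A|\bm X)^{-1}$ cancels against the conditional treatment probability $p(a|\bm x)$ upon summing over $a\in{\cal A}$, one obtains
\begin{eqnarray}\nonumber
\bar L_\gamma(\bm\psi):=\mathbb E[L_\gamma(\bm\psi)]
=-\frac1\gamma\,\mathbb E\Bigg[\frac{\sum_{a\in{\cal A}}Q_{\rm true}(\bm X,a)\,\widetilde Q_{\bm\psi}(\bm X,a)^{\gamma}}{\big\{\sum_{a\in{\cal A}}\widetilde Q_{\bm\psi}(\bm X,a)^{1+\gamma}\big\}^{\frac{\gamma}{1+\gamma}}}\Bigg]
= H_\gamma\big(Q_{\rm true},\widetilde Q_{\bm\psi}\big),
\end{eqnarray}
where $\widetilde Q_{\bm\psi}(\bm x,a):=\exp\{g(\bm x,a,\bm\psi)\}$ and $H_\gamma$ is the $\gamma$-power entropy \eqref{entropy}. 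Crucially, the factor $\exp\{f(\bm x)\}$ never enters this expression, since only $g$ appears in \eqref{gamma-loss}.

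Second, since $H_\gamma(Q_{\rm true},Q_{\rm true})$ does not depend on $\bm\psi$, minimizing $\bar L_\gamma(\bm\psi)$ over $\Psi$ is the same as minimizing $D_\gamma(Q_{\rm true},\widetilde Q_{\bm\psi})=H_\gamma(Q_{\rm true},\widetilde Q_{\bm\psi})-H_\gamma(Q_{\rm true},Q_{\rm true})$. By the multiplicative form of ${\cal M}$ in \eqref{Q-model}, $Q_{\rm true}(\bm x,a)=\exp\{f(\bm x)\}\,\widetilde Q_{\bm\psi_0}(\bm x,a)$, so $Q_{\rm true}\sim\widetilde Q_{\bm\psi_0}$ in the sense of Definition \ref{def1}; hence Theorem \ref{thm1} gives $D_\gamma(Q_{\rm true},\widetilde Q_{\bm\psi_0})=0$, so $\bm\psi_0$ minimizes $\bar L_\gamma$ — and this conclusion did not use $f$ at all, the nuisance factor having been absorbed into the equivalence. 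For uniqueness, the equality clause of Theorem \ref{thm1} together with transitivity of $\sim$ shows that $\bar L_\gamma(\bm\psi)=\bar L_\gamma(\bm\psi_0)$ forces $\widetilde Q_{\bm\psi}\sim\widetilde Q_{\bm\psi_0}$, i.e. $g(\bm x,a,\bm\psi_0)-g(\bm x,a,\bm\psi)$ is independent of $a$ for almost every $\bm x$; under the natural identifiability normalization of the parametric component $g$ (for instance $g(\bm x,a^\circ,\bm\psi)\equiv 0$ for a reference action $a^\circ$, so that an $a$-free difference forces $\bm\psi=\bm\psi_0$) the population minimizer is unique and equals $\bm\psi_0$.

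Third, I would close the argument in the standard way: assuming $\Psi$ compact, $\bm\psi\mapsto g(\bm x,a,\bm\psi)$ continuous, $\mathbb E[Y/p(A|\bm X)]<\infty$, and an integrable envelope dominating the weight ratio in \eqref{gamma-loss} uniformly over $\bm\psi$, a uniform law of large numbers gives $\sup_{\bm\psi\in\Psi}|L_\gamma(\bm\psi)-\bar L_\gamma(\bm\psi)|\to 0$ in probability; combined with continuity and the unique, well-separated minimizer $\bm\psi_0$ of $\bar L_\gamma$, the argmin (Wald) consistency theorem yields $\hat{\bm\psi}_\gamma\to\bm\psi_0$ in probability, irrespective of $f$.

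I expect the substantive step to be the second one: recognizing that the population loss equals $H_\gamma(Q_{\rm true},\widetilde Q_{\bm\psi})$ and that, by the policy-equivalence invariance underlying Theorem \ref{thm1} and Corollary \ref{cor1}, it is minimized exactly at $\bm\psi_0$ with no dependence on the nuisance. Steps one and three are routine; the only points needing care are (i) the cancellation of $p(A|\bm X)$, which is why $p(a|\bm x)$ is taken known — when it is estimated a separate, more delicate argument is required — and (ii) the moment and domination conditions underpinning the uniform law of large numbers, which should be stated as standing regularity assumptions.
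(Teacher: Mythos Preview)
Your proposal is correct and follows essentially the same route as the paper: compute the population loss as a $\gamma$-power cross entropy, use the policy-equivalence invariance (Theorem \ref{thm1}/Corollary \ref{cor1}) to show $\bm\psi_0$ minimizes it regardless of $f$, then invoke a law of large numbers plus an argmin argument. The only cosmetic difference is that the paper applies Corollary \ref{cor1} to rewrite $H_\gamma(Q_{\rm true},\widetilde Q_{\bm\psi})$ as $H_\gamma(Q(\cdot,\bm\psi_0),Q(\cdot,\bm\psi))$ with $f$ restored in both slots, whereas you leave $\widetilde Q_{\bm\psi}$ bare and instead invoke $Q_{\rm true}\sim\widetilde Q_{\bm\psi_0}$ directly; these are equivalent uses of the same invariance. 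Your treatment is in fact more careful than the paper's on two points: you explicitly flag the identifiability condition on $g$ needed for uniqueness of the population minimizer, and you correctly appeal to a uniform law of large numbers and Wald-type argmin consistency rather than the paper's bare pointwise convergence plus ``continuous mapping theorem,'' which as stated does not suffice.
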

\begin{proof}
We observe that 
\begin{eqnarray}\nonumber 
\mathbb E[L_\gamma({\bm\psi})]=-  \frac{1}{\gamma}\mathbb E\Big[
\frac{Y}{p(A|\bm X)}
\displaystyle \frac{\exp\{\gamma g(\bm X ,A ,{\bm\psi})\}}{
\{\sum_{a \in{\cal A}}\exp\{(\gamma+1) g(\bm X ,a,{\bm\psi})\}\}^{\frac{\gamma}{1+\gamma}}}\Big]
\end{eqnarray}
which is written as
\begin{eqnarray}\nonumber 
-  \frac{1}{\gamma}\mathbb E\Big[
{Q(\bm X,A,\bm\psi_0)}
\displaystyle \frac{Q(\bm X,A,\bm\psi)^\gamma}{
\{\sum_{a \in{\cal A}}Q(\bm X,A,\bm\psi)^{\gamma+1}\}^{\frac{\gamma}{1+\gamma}}}\Big]
\end{eqnarray}
by an argument similar to the proof of Corollary \ref{cor1} since $\exp\{ g(\bm X ,A ,{\bm\psi})\}$ and $Q(\bm x, a,\bm\psi)$ is policy-equivalent.
Consequently,
\begin{eqnarray}\nonumber 
\mathbb E[L_\gamma({\bm\psi})]=H_\gamma(Q(\bm X,A,\bm\psi_0),Q(\bm X,A,\bm\psi))
\end{eqnarray}
which implies 
\begin{eqnarray}\nonumber 
\mathbb E[L_\gamma({\bm\psi})]-\mathbb E[L_\gamma({\bm\psi}_0)]=D_\gamma(Q(\bm X,A,\bm\psi_0),Q(\bm X,A,\bm\psi)).
\end{eqnarray}
This concludes $\bm\psi_0=\argmin_{\bms\psi\in\Psi} \mathbb E[L_\gamma(\psi)]$.
Since $\{(\bm X_i,A_i,Y_i) :i=1,...,n\}$ be a random sequence, $L_\gamma(\psi)$ almost surely converges to $ \mathbb E[L_\gamma(\psi)]$.
Hence, from the continuous mapping theorem, $\hat{\bm\psi}_\gamma$ is almost surely converges to  $\bm\psi_0$.
\end{proof}
%Mann, H. B.; Wald, A. (1943). "On Stochastic Limit and Order Relationships". Annals of Mathematical Statistics. 14 (3): 217–226. 
Here we observe a remarkable property of the $\gamma$-MDE, in which
the consistency for the parameter $\bm\psi$ holds even if there exists the unknown nuisance function $f(\bm x)$.
In other words,  the $\gamma$-MDE satisfies the semiparametric consistency in the multiplicative semiparametric model \eqref{Q-model}.
This property is similar to the property of  G-estimator, in which  the blip function or the difference of Q-functions,
 cf. Robins \cite{robins2004} and Robins et al. \cite{robins-etal2000}.
Consider a semiparametric additive model as
\begin{eqnarray}\nonumber 
\tilde{\cal M}=\{ Q(\bm x, a, \bm\psi):= f(\bm x)+g(\bm x,a,\bm\psi): f\in{\cal F},\psi\in{\Psi}\}.
\end{eqnarray} 
Then, the G-estimator efficiently eliminates the nuisance function as in the blip function
\begin{eqnarray}\nonumber 
     Q(\bm x, a, \bm\psi)- Q(\bm x, a_0, \bm\psi)=g(\bm x,a,\bm\psi)-g(\bm x,a_0,\bm\psi)
\end{eqnarray} 
with a fixed treatment $a_0$.
%Robins, J. M. (2004). Xptimal structural nested models for optimal sequential decisions. In D. Y. Lin & P. Heagerty (Eds.),        Robins J. M. (1999). Marginal structural models versus structural nested models as tools for causal inference. In:M. E. Halloran & D. Berry (Eds.) Statistical models in epidemiology: The environment and clinical trials. IMA, 116, NY: Springer- Verlag, pp. 95.134.        Robins, J. M., Hern´ an, M. A., & Brumback, B. (2000). Marginal structural models and causal inference in epidemiology. Epidemiology, 11,550.560.

The estimating function of the $\gamma$-MDE, or the gradient vector of the summand 
in \eqref{gamma-loss} is given by
\begin{eqnarray}\nonumber 
 {\cal E}_\gamma({\bm\psi}, \bm X,A,Y) =\frac{Y }{p(A|\bm X)}
 \frac{e^{\gamma g(\bm X,A,\bms\psi)}}{R(\bm X,\bm \psi)}
%\nonumber\\&&\times
 {\sum_{a \in{\cal A}}w(\bm X,a,\bm\psi)\{\bm G(\bm X ,A  ,{\bm\psi})-\bm G(\bm X ,a ,{\bm\psi})}\}, 
\end{eqnarray}
where $w(\bm x,a,\bm\psi)=\exp\{(\gamma+1) g(\bm x ,a,{\bm\psi})\}$, $\bm G(\bm x,a,\bm\psi)=(\partial/\partial\bm\psi)g(\bm x,a, \bm\psi)$ and
\begin{eqnarray}\nonumber 
 R(\bm X,\bm\psi)=
 \bigg({\sum_{a \in{\cal A}}w(\bm X,a,\bm\psi)}\bigg)^{\frac{\gamma}{\gamma+1}+1}.
\end{eqnarray}
Thus, we observe
\begin{eqnarray}\nonumber 
&&\mathbb E[{\cal E}_\gamma({\bm\psi}, \bm X,A,Y)|\bm X=\bm x]\nonumber\\[2mm]
&=&\sum_{a \in{\cal A}}\sum_{a' \in{\cal A}} \mathbb E\big[
w(\bm X,a',\bm\psi)w(\bm X,a,\bm\psi)\{ \bm G(\bm X ,a'  ,{\bm\psi}) - \bm G(\bm X ,a ,{\bm\psi})  \}\big],
\end{eqnarray}
which vanishes. 
This is because we observe a formula such that 
\begin{eqnarray}\nonumber 
\mathbb E\bigg[\frac{Y }{p(A|\bm X)}F(\bm X,A)\bigg{|}\bm X=\bm x\bigg]=\exp\{f(\bm x)\}
 \sum_{a\in{\cal A}}\exp\{g(\bm x, a,\bm\psi)\}F(\bm x, a). 
\end{eqnarray}
 Hence the estimating function is unbiased.
The $\gamma$-MDE $\hat{\bm\psi}_\gamma$ is given by
the solution of the estimating equation
\begin{eqnarray}\nonumber 
\sum_{i=1}^n{\cal E}_\gamma({\bm\psi}, \bm X_i,A_i,Y_i)={\bf0}.
\end{eqnarray}
The asymptotic variance is given by $\bm J(\bm\psi)^{-1} \bm I(\bm\psi) \{\bm J(\bm\psi)^\top\}^{-1}$
due to the sandwich formula, where
\begin{eqnarray}\nonumber 
\bm I(\bm\psi)= \mathbb E\big[{\cal E}_\gamma({\bm\psi}, \bm X,A,Y){\cal E}_\gamma({\bm\psi}, \bm X,A,Y)^\top\big] 
\end{eqnarray}
and
\begin{eqnarray}\nonumber 
\bm J(\bm\psi)=\mathbb E\Big[\frac{\partial {\cal E}_\gamma({\bm\psi}, \bm X,A,Y)}{\partial \bm\psi^\top}\Big]\end{eqnarray}
There is an interesting problem: which $\gamma$ gives the locally efficient estimator in
the class of all the $\gamma$-MDE? 
However, the problem is unsolved, and so it should be clarified a geometric understanding  for  the nonparametric nuisance space
\begin{eqnarray}\nonumber 
{\cal N} = \{\exp\{f(\bm x)\}:f\in{\cal F}\}
\end{eqnarray} 
in the Q-function space $\cal Q$.

We now consider a  linear model as
\begin{eqnarray}\nonumber 
g(\bm x,a,\bm\psi)=\bm\psi_0^\top \bm s_0(a)+\bm s_1(a)^\top \bm\Psi_1 \bm t(\bm x)
\end{eqnarray} 
where $\bm\psi=(\bm\psi_0,\bm\Psi_1)$.
The first term denotes a main effect for the treatment $A$; the second term denotes an interaction effect between the covariate $\bm X$ and the treatment $A$. 
Thus, in the estimating function ${\cal E}_\gamma(\bm\psi,\bm X,A,Y)$,
\begin{eqnarray}\nonumber 
\bm G(\bm X ,A  ,{\bm\psi})-\bm G(\bm X ,a ,{\bm\psi}) =
\begin{bmatrix}
\bm s_0(A)-\bm s_0(a)\\
(\bm s_1(A)-\bm s_1(a)) \bm t(\bm X)^\top
\end{bmatrix}
\end{eqnarray}
We have a specific choice for the power index $\gamma$ as $-1$, in which the $\gamma$-power divergence is associated with the geometric mean of Q-functions, discussed as in Theorem \ref{thm3}.
Then, the estimating function becomes an unweighted form as
\begin{eqnarray}\nonumber 
 {\cal E}_\gamma({\bm\psi}, \bm X,A,Y)= \frac{Y }{p(A|\bm X)}
 {\exp\{- g(\bm X ,A  ,{\bm\psi})\}}
 \begin{bmatrix}
\bm s_0(A)-\bar{\bm s}_0\\
(\bm s_1(A)-\bar{\bm s}_1) \bm t(\bm X)^\top
\end{bmatrix}
\end{eqnarray}
because the weight function $w(\bm x, a, \bm\psi)$ becomes $1$, where
$\bar{\bm s}_j=m^{-1}\sum_{a\in{\cal A}}\bm s_j(a)$ for $j=0,1$.
%Accordingly, the corresponding estimator has less computational costs than other power index $\gamma$.  

We briefly discuss the situation in  a multiple stage.
	Let $\{(\bm X_{1i},A_{1i},Y_{1i},...,\bm X_{Ti},A_{Ti},Y_{Ti}):i=1,...,n\}$ be a random sequence of $n$ trajectories for $T$ stage sequence and ${\cal D}=({\cal D}_1,...,{\cal D}_T)$ 
the vector of $T$ policy functions.  
Thus, the $t$-th history vector is written as $$\bm H_t=(\bm X_1,A_1,....,\bm X_{t-1},A_{t-1},\bm X_t)$$ and the $t$-th policy function is ${\cal D}_t(\bm H_t)=A_t$.  
We assume the SMART condition, cf. Murphy (2005) for detailed discussion.
Then, 
the $t$-th Q function is modeled as
 \begin{eqnarray}\nonumber 
{\cal M}_t = \{Q_t(\bm h_t,a_t):=\exp\{f_t(\bm h_t)+g_t(\bm h_t, a_t, \bm\psi_t)\}:f_t\in {\cal F}_t,\bm\psi_t\in\Psi_t\}.
\end{eqnarray}
%In these settings, the policy-equivalence relation can be defined and Proposition \ref{prop1} still holds at any stage $t, 1\leq t\leq T$.
At the final stage $T$, the $\gamma$-power  loss function for $\bm\psi_T$ is defined in the same way as in \eqref{gamma-loss}
replacing $(\bm X_i,A_i,Y_i)$'s to $(\bm H_{Ti},A_{Ti},Y_{Ti})$'s.
Assume that the estimated policy $\hat{\cal D}_{t+1}$ is given beforehand the $t$-stage in  a backward manner.
Then the $\gamma$-power  loss function for $\bm\psi_t$ is defined by
\begin{eqnarray}\nonumber %\label{gamma-loss}
L^{(t)}_\gamma({\bm\psi})=- \frac{1}{\gamma}\frac{1}{n}\sum_{i=1}^n
\frac{\tilde Y_{ti}}{p_t(A_{ti}|\bm H_{ti})}
\displaystyle \frac{\exp\{\gamma g_t(\bm H_{ti},A_{ti},{\bm\psi}_{t})\}}{
\{\sum_{a_t \in{\cal A}_t}\exp\{(\gamma+1) g_t(\bm H_{ti},a_t,{\bm\psi}_t)\}\}^{\frac{\gamma}{1+\gamma}}},
\end{eqnarray}
where $p_t(a_t|\bm H_t)$ is a known probability function of $A_t$ given $\bm H_t=\bm h_t$ and
$$
\tilde Y_{ti} =Y_{ti}+\hat{Q}_{t+1}(\bm H_{t+1}, \hat{D}_{t+1}(H_{ti}))\big|\bm H_{ti},A_{ti}\big].
$$
The $\gamma$-MDE is given by
\begin{eqnarray}\nonumber %\label{psi-est}
\hat{\bm\psi}_{t,\gamma}= \argmin_{\bm\psi\in\Psi}L_\gamma^{(t)}({\bm\psi}_t),
\end{eqnarray}
for the backward manner as $t=T,T-1,...,1$.
Finally, we get the estimator of the policy vector $\hat{\cal D}=(\hat{\cal D}_1,...,\hat{\cal D}_T)$ as
$$
\hat{\cal D}_t(\bm h_t)=\argmax_{a_t\in{\cal A}_t }g_t(\bm h_t, a_t, \hat{\bm\psi}_{t,\gamma})
$$ 
for $t=1,...,T$.

\subsection{Minimum $\beta$-power divergence}

Let us employ the $\beta$-power divergence for the estimation of the optimal policy.
We assume a parametric model
\begin{eqnarray}\label{Q-model1}
{\cal M}^*=\{ Q(\bm x, a, \bm \alpha,\bm\psi):=\exp\{f(\bm x,\alpha)+g(\bm x,a,\bm\psi)\}: \alpha \in{\Omega},\psi\in{\Psi}\},
\end{eqnarray} 
where  $\bm\alpha$ and $\bm\psi$ are unknown parameter vectors.
Here it is noted that $\bm\psi$ determines the optimal policy as ${\cal D}^{\rm opt}(\bm x)={\rm argmax}_{a\in{\cal A}} g(\bm x,a,\bm\psi)$; $\bm\alpha$ is irrelevant to any policy determination.
In this sense we call $\bm\psi$ the policy parameter and $\bm\alpha$ the nuisance parameter.
For a dataset $\{(\bm X_i,A_i,Y_i) :i=1,...,n\}$ from the model $\cal M^*$ %with unknown parameter $\bm\psi$.
 the $\beta$-power  loss function for $(\bm\alpha,\bm\psi)$ is defined by
\begin{eqnarray}\nonumber 
L_\beta(\bm\alpha,{\bm\psi})&=&\frac{1}{n}\sum_{i=1}^n \frac{{Y_i}}{p(A_i|\bm X_i)}\Big\{-\frac{
\exp\{\beta(f(\bm X_i,\bm\alpha)+g(\bm X_i,A_i,{\bm\psi}))\}}{\beta}\nonumber\\
&&
\hspace{23mm}+\sum_{a\in{\cal A}}\frac{\exp\{(\beta+1)(f(\bm X_i,\bm\alpha)+g(\bm X_i,a,{\bm\psi}))\}}{\beta+1}\Big\},
%-\frac{1}{\gamma} \frac{1}{n} \sum_{i=1}^n\frac{Y_i}{p(A_i|\bm X_i)}\displaystyle \frac{\exp\{\gamma \}}{\{\sum_{a \in{\cal A}}\exp\{(\gamma+1) g(\bm X_i,a,{\bm\psi})\}\}^{\frac{\gamma}{1+\gamma}}},
\end{eqnarray}
where $p(a|\bm x)$ is a known probability function of $A$ given $\bm X=\bm x$.
The proposed estimator is given by
\begin{eqnarray}\nonumber \label{psi-est1}
(\hat{\bm\alpha}_\beta,\hat{\bm\psi}_\beta)= \argmin_{(\bm\alpha,\bm\psi)\in\Omega\times \Psi}L_\beta(\bm\alpha,{\bm\psi}),
\end{eqnarray}
called the minimum $\beta$-power divergence estimator ($\beta$-MDE).
We observe that, if the dataset is from a distribution with $Q(\bm x,a,\bm\alpha_0,\bm\psi_0)$, then
\begin{eqnarray}\nonumber 
\mathbb E[ L_\beta(\bm\alpha,{\bm\psi})]=H_\beta(Q(\bm X, A,\bm\alpha_0,\bm\psi_0),Q(\bm X, A,\bm\alpha,\bm\psi)).
\end{eqnarray}
Hence,  $\mathbb E[ L_\beta(\bm\alpha,{\bm\psi})]-\mathbb E[ L_\beta(\bm\alpha_0,{\bm\psi}_0)]=D_\beta(Q(\bm X, A,\bm\alpha_0,\bm\psi_0),Q(\bm X, A,\bm\alpha,\bm\psi))$.
This concludes the consistency of the $\beta$-MDE $(\hat{\bm\alpha}_\beta,\hat{\bm\psi}_\beta)$ to the true parameter $(\bm\alpha_0,\bm\psi_0)$ by an argument similar to the proof for the $\gamma$-power MHD in Theorem \ref{thm5}.

Consider the $\gamma$-MDE under the model assumption \eqref{Q-model1}.
The $\gamma$-loss function for $L_\gamma(\bm\alpha,\bm\psi)$ is exactly equal to $L_\gamma(\bm\psi)$ in \eqref{gamma-loss} under the semiparametric model $\cal M$ in \eqref{Q-model} since the Q-function 
$Q(\bm x,a,\bm\alpha,\bm\psi)$ is policy-equivalent to $g(\bm x,\bm\psi)$ even when the nuisance function is $f(\bm x,\bm\alpha)$, or $f(\bm x)$.
Hence, the $\gamma$-MHD $\hat{\bm\psi}_\gamma$ for the policy parameter is always consistent with $\bm\psi_0$ even if the nuisance function $f(\bm x,\bm\alpha)$ is misspecified.
On the other hand, the $\beta$-MHD $\hat{\bm\psi}_\beta$ is not consistent with $\bm\psi_0$ in such a
misspecified situation.
Thus, the $\gamma$-MHD and $\beta$-MHD are quite a contrast.
Our main objective is  the estimation for the optimal policy, and hence
the $\gamma$-MHD should be supported from a point of robustness.

%Note that, if $Q (\bm x,a)$ is the Q-function induced from the underlying distribution with the density function \eqref{under}, then 
%$\mathbb E[L_\beta({\bm \theta})]=H_\beta^{\rm c}(Q_0(\bm X,A),Q(\bm X,A,\bm\theta))$ and 
%$\mathbb E[L_\beta({\bm\theta})]=H_\beta (Q  ,Q(\cdot,\bm\theta)).$

\section{Q-learning}\label{sec6}

We overview the Q-learning, which is the regression modeling and analysis with $(\bm x,a)$ as an explanatory variable.
Let ${\cal P}$ be the space of all the conditional density functions of  the outcome $Y$ given the covariate $\bm X=\bm x$ and the treatment $A=a$.
%Assume an exponential model in ${\cal P}$ such that
%\begin{eqnarray}\label{e-model}
% p^{(\rm e)}(y|\bm x, a)= \exp\{y\nu(\bm x,a)-\kappa(\nu(\bm x,a))\},
%\end{eqnarray}
%which is the basic setting in a generalized linear model with no dispersion parameter, where $$\kappa(\nu(\bm x,a))=\log \int_{\cal Y}\exp\big\{  y\nu(\bm x,a)\}dy,$$ cf. McCullagh \& Nelder (2019) %$\bm\theta$ is unknown parameter and 
%Here the dispersion parameter  is omitted  for convenience.
%Thus, the Q-function is given by
%\begin{eqnarray}%\label{Q}
%Q (\bm x,a)=  \kappa^\prime(\nu(\bm x,a))
%\end{eqnarray}
%Let ${\cal P}(\bm x,a)=\{p(\cdot|\bm x,a)\}.$
%Typically, the Kullback-Leibler (KL) divergence of conditional density functions of the form \eqref{e-model} is written as
%\begin{eqnarray}
%&&D_{\rm KL}( p_{1}^{(\rm e)}(\cdot|\bm X,A),p_{ 2}^{(\rm e)}(\cdot|\bm X,A))\nonumber\\[2mm]&
%= & \mathbb E[ {Q_1(\bm x,a)\{\nu_{ 1}(\bm X,A)-\nu_{ 2}(\bm X,A)\}-\kappa(\nu_{ 1}(\bm X,A))+\kappa(\nu_{ 2}(\bm X,A))}].
%\end{eqnarray}
%if $q(y|\bm x, a)= \exp\big\{\frac{ 1}{\phi}(y\nu_1(\bm x,a)-\kappa_1(\nu(\bm x,a)))- c(y,\phi)\big\}.$

Consider a parametric model in $\cal P$ as
\begin{eqnarray}\label{p-model}
{\cal M}_{\rm C}=\{p(y|\bm x, a,\bm\theta) := \exp\{  y\nu(\bm x,a,\bm\theta)-\kappa(\nu(\bm x,a,\bm\theta))\}: \theta\in\Theta\}
\end{eqnarray}
 with an unknown parameter $\bm\theta$, cf. McCullagh \& Nelder \cite{mccullagh-nelder2019}.
Here the dispersion parameter is assumed to be known.
For the nonnegative outcome $Y$ this model includes Poisson distribution, exponential distribution, inverse Gaussian distribution, and Bernoulli distribution as typical examples. 
For example, a canonical link model is given by
$$\nu(\bm x,a,{\bm\theta})=f(\bm x,\bm\alpha)+g(\bm x, a,\bm\psi),$$
with $ {\bm\theta}=(\bm\alpha,\bm\psi)$, which implies that
\begin{eqnarray}\nonumber %\label{Q}
Q (\bm x,a,\bm\alpha,\bm\psi)=  \kappa^\prime(f(\bm x,\bm\alpha)+g(\bm x, a,\bm\psi))
\end{eqnarray}
due to a basic property of the exponential family \eqref{p-model}.
%For example, if $Y$ follows an exponential distribution, the link function is the negative-inverse, that is,
%\begin{eqnarray}%\label{Q}
%Q (\bm x,a,\bm\theta)= -\frac{1}{f(\bm x,\bm\alpha)+g(\bm x, a,\bm\psi))}
%\end{eqnarray}
This modeling is closely related to the model ${\cal M}^*$ defined in \eqref{Q-model1}.
If the link function is a log link, then the model \eqref{p-model} reduces to \eqref{Q-model1}.
Similarly, $\bm\psi$ is the policy parameter; $\bm\alpha$ is the nuisance parameter in the sense that
\begin{eqnarray}\nonumber %\label{Q}
\argmax_{a\in{\cal A}}Q (\bm x,a,\bm\alpha,\bm\psi)=  \argmax_{a\in{\cal A}}g(\bm x, a,\bm\psi)
\end{eqnarray}
due to the monotonicity of $\kappa^\prime(\cdot)$.
For a given dataset $\{\bm X_i,A_i,Y_i):i=1,...,n\}$, the log-likelihood function for the parametric model  is essentially given by
\begin{eqnarray}\nonumber 
L(\bm\theta)= \sum_{i=1}^n \{Y_i \nu (\bm X_i,A_i,\bm\theta)-\kappa(\nu (\bm X_i,A_i,\bm\theta))\},
\end{eqnarray}
and hence the maximum likelihood (ML) estimator $\hat{\bm\theta}_{\rm ML}=(\hat{\bm\alpha}_{\rm ML},\hat{\bm\psi}_{\rm ML})$ is obtained by the maximization of the log-likelihood function $L(\bm\theta)$ in $\bm\theta=(\bm\alpha,\bm\psi)$.
The estimated policy function for a given covariate $\bm x$ is provided  by 
\begin{eqnarray}\label{ml-policy}
\hat{\cal D}_{\rm ML}(\bm x)=\argmax_{a\in{\cal A}}g(\bm x,a,\hat{\bm\psi}_{\rm ML})
\end{eqnarray}
via plugging in the ML estimator $\hat{\bm\psi}_{\rm ML}$ of the policy parameter $\bm\psi$.
The estimating function is given by
\begin{eqnarray}\nonumber 
{\cal E}_{\rm ML}(\bm\alpha,\bm\psi)=\{Y-Q(\bm X,A,\bm\alpha,\bm\psi)\}
\begin{bmatrix}
\frac{\partial}{\partial\bm\alpha} \nu (\bm X ,A ,\bm\theta)\\
\frac{\partial}{\partial\bm\psi} \nu (\bm X ,A ,\bm\theta)
\end{bmatrix},
\end{eqnarray}
which is unbiased, that is $\mathbb E[{\cal E}_{\rm ML}(\bm\theta)]={\bf0}$ because $\mathbb E[Y|\bm X,A]=Q(\bm X,A,\bm\alpha,\bm\psi)$.
Thus, $\hat{\bm\theta}_{\rm ML}$ is consistent for $\bm\theta$.
The usual regression analysis  is conducted in the generalized linear model, and  the ML-based policy estimator  ${\cal D}_{\rm ML}(\bm x)$ as in \eqref{ml-policy} is directly yielded in the single stage case.
This is called the Q-learning, which is extended to a multi-stage case by a stagewise regression analysis with a pseudo outcome in backward learning.  
Wallace \& Moody (2015) discusses the balanced weighting for ${\cal E}_{\rm ML}$ 
in the case of normal regression, which gives doubly robust properties of the weighted ML
estimator.

Let us consider  the information divergence defined on ${\cal Q}$.
For example, the extended Kullback-Leibler (eKL) divergence is given by
\begin{eqnarray}\nonumber %\label{eKL}
D_{\rm eKL}(Q_0,Q_1)=\int_{\cal X}\sum_{a\in{\cal A}}\Big\{Q_0(\bm x,a)\log \frac{Q_0(\bm x,a)}{Q_1(\bm x,a)}-{Q_1(\bm x,a)}+{Q_0(\bm x,a)}\Big\}p(\bm x)d\bm x,
\end{eqnarray}
which is equal to the $\beta$-power divergence taking the limit of $\beta$ to $0$.
%We note that $D_{\rm eKL}(Q_0,Q_1)\geq0$ with equality if and only if $Q_0(\bm x,a)=Q_1(\bm x,a)$, that is, $D_{\rm eKL}(Q_0,Q_1)$ satisfies the first axiom of  `distance'.

Apparently, the KL divergence in $\cal P$ and $D_{\rm eKL}(Q_0,Q_1)$ are unrelated.
However, if $Y$ has a Poisson distribution, that is
\begin{eqnarray}\nonumber %\label{model1}
 p(y|\bm x, a)= \frac{Q(\bm x,a)^y}{y!}\exp\big\{- Q(\bm x,a)\big\}
\end{eqnarray}
for $y=0,1,...$, then $$D_{\rm eKL}(Q_0,Q_1)=\mathbb E[D_{\rm KL}( p_{1}(\cdot|\bm X,A),p_{ 2}(\cdot|\bm X,A))].$$
Thus, there is a close relationship between the information divergences defined on the space of Q-functions and the space of
 conditional density functions of the outcome.

We have discussed a direct modeling of the Q-function as in the semiparametric model $\cal M$ in \eqref{Q-model} and the parametric model ${\cal M}^*$ in \eqref{Q-model1}.
Under these models, the $\gamma$-power and $\beta$-power MDEs are discussed from the viewpoint of
model consistency.
The semiparametric consistency of the $\gamma$-power MHD is observed in Theorem \ref{thm5}. 
Alternatively, in this section, the standard discussion by the maximum likelihood is given, and pointed out the close relationship between the two approaches.
However, the maximum likelihood estimator $\hat{\bm\psi}_{\rm ML}$ does not in general satisfy the semiparametric consistency, which is similar to the case of  the $\beta$-power MDE.

\section{Simulation study}\label{sec7}

We have a brief simulation study for checking numerical behaviors of the $\gamma$-power MDE and $\beta$-power MDE in  parametric and semiparametric situations, cf Tsiatis \cite{tsiatis2006}. %Semiparametric theory and missing data..

Let $X$ be generated from a normal distribution $N (1,0.5)$, and $A$ from a trinomial distribution with
a cell probability vector $(1/3,1/3,1/3)$.
The outcome $Y$ is generated from an exponential distribution with the conditional mean
\begin{eqnarray}\nonumber 
Q(x,a,\bm \alpha,\bm\psi)=\exp\{\alpha_1 x+\alpha_0\}\exp\{x \psi_1 a+\psi_0 a\},
\end{eqnarray}
where $\bm \alpha$ and $\bm\psi$ are fixed as $(-1.0,-2.0)$ and $( 2.0,-1.0)$, respectively.
Then, the $\gamma$-power loss function is given by
\begin{eqnarray}\nonumber 
L_\gamma(\bm\psi)=\frac{1}{\gamma}\frac{1}{n}\sum_{i=1}^n  \frac{ Y_i \exp\{\gamma(X_i \psi_1 A_i+\psi_0 A_i)\}}{\big[\sum_{a=1}^3 \exp\{(\gamma+1)(X_i \psi_1 a +\psi_0 a)\}\big]^{\frac{\gamma}{\gamma+1}}}
.
\end{eqnarray}
On the other hand, the $\beta$-power loss function is given by
\begin{eqnarray}\nonumber 
L_\beta(\bm\alpha, \bm\psi)&=&\frac{1}{n}\sum_{i=1}^n \bigg[\frac{ \sum_{a=1}^3 \exp\{(\beta+1)(\alpha_1 X_i+\alpha_0+X_i \psi_1 a +\psi_0 a)\} }{\beta+1}\nonumber\\
&&\hspace{10mm}- \frac{Y_i \exp\{\beta(\alpha_1 X_i+\alpha_0+X_i \psi_1 A_i+\psi_0 A_i\}}{\beta} \bigg].
\end{eqnarray}

We simulated the random sequence with $n=500$ and 300 replications, where the true value  of the policy parameter is $(\psi_1,\psi_0)=(2.0,-1.0)$.
For a fair comparison, %between the $\beta$-power and $\gamma$-power MDEs 
we set the pair of power indices as $\beta=\gamma=-1.5$.
In this specific case, both estimators with positive power indices relatively tend to have less performance than those with negative indices.  
We observe that there was no superiority or inferiority relationship between the $\beta$-power and $\gamma$-power MDEs, cf. Table \ref{table1}(a).
In conclusion, both estimators are confirmed the consistency for $(\psi_1,\psi_0)$ since the mean vectors for 300 replications are nearly equal to the true $(2.0,-1.0)$ with reasonable Root Mean Squares Estimates (RMSEs).
\begin{table}[hbtp]
% \centering
  \caption{Performance of  the $\gamma$-power and $\beta$-power MDEs}
  \label{table1}
  \centering
\vspace{2mm}
(a). The corrected model case

\vspace{2mm}
  \begin{tabular}{ccc}
  \hline 

    Method  & Mean of $(\hat\psi_1,\hat\psi_0)$ &  RMSE of  $(\hat\psi_1,\hat\psi_0)$
 \\
    \hline \hline%\vspace{1mm}

    $\gamma$-power& $(2.012, -1.009)$  & $(0.251, 0.342)$\\
    $\beta$-power  & $(2.074, -1.005)$  & $(0.569, 0.259)$ \\
    \hline
  \end{tabular}

\vspace{3mm}
(b). The misspecified model case

\vspace{2mm}
  \begin{tabular}{ccc}
    \hline
    Method  & Mean of  $(\hat\psi_1,\hat\psi_0)$ &  RMSE of $(\hat\psi_1,\hat\psi_0)$  \\
    \hline \hline%\vspace{1mm}

    $\gamma$-power& $(2.029, -1.018)$  & $(0.273, 0.373)$\\
    $\beta$-power  & $(\bf{3.064}, -\bf{0.428})$  & $(\bf{3.069}, \bf{2.290})$ \\
    \hline
  \end{tabular}
\end{table}

We next change the simulation scenario.
The conditional mean of $Y$  is misspecified as
\begin{eqnarray}\label{mis}
Q(x,a,\bm\psi)=\exp\{-(x-1)^2\}\exp\{x \psi_1 a+\psi_0 a\}.
\end{eqnarray}
Here $\bm\psi$ is still fixed as $(2.0,-1.0)$, however, the working model $\exp\{\alpha_1 x+\alpha_0\}$
is replaced with $\exp\{-(x-1)^2\}$.
Similarly, we have a numerical comparison between the $\beta$-power and $\gamma$-power MDEs fixing the same setting as above
except for \eqref{mis}, cf. Table \ref{table1}(b).
The mean of the  $\beta$-power MDE had a larger bias and
the RMSEs  became larger because of  the bias arisen by the misspecification of the nuisance parameter $\bm \alpha$.
We conclude that the $\gamma$-power MDE is much more efficient than  the $\beta$-power MDE.
In this way,  we confirm that $\gamma$-power MDE has robustness for misspecification.

This simulation experiment does not sufficiently present the comparison conclusions.
An adapted selection for the power indices $\gamma$ and $\beta$ based on the dataset can be given by $K$-fold cross-validation or information criterion methods, however, we just fixed the same value.
We need to investigate the distributional behaviors of these estimators given in a more practical setting with a  multi-stage setting.  
However, we briefly confirmed the result of the semiparametric consistency in Theorem \ref{thm5}.

\section{Discussion}\label{sec8}% and conclusions}

We have discussed the $\gamma$-power divergence %$D_\gamma(Q_0,Q_1)$ 
on the Q-function space and the $\gamma$-MDE from statistical perspectives.
Theorem \ref{thm1} characterizes $\gamma$-power divergence %$D_\gamma(Q_0,Q_1)$ 
based on the policy equivalence relation defined on the Q-function space, in which the $\gamma$-power divergence %$D_\gamma(Q_0,\cdot)$ 
is invariant under the policy equivalence relation as in Corollary \ref{cor1}. % is introduced in Definition \ref{def1}, which agrees with the main object of DTR to find the optimal policy.
Specific forms of $\gamma$-power divergence are discussed taking some values of the power index $\gamma$ % such as $\gamma=\infty, -1,-2$  
in Theorems \ref{thm2}, \ref{thm3} and \ref{thm4}.
As a main result, we show the semiparametric consistency of the $\gamma$-power MDE for the multiplicative semiparametric model in Theorem \ref{thm5}.
In effect, this advantageous property is valid for any power index $\gamma$ of $\mathbb R$.
We do not have a good understanding of the choice for $\gamma$ at the present stage, in which an empirical study similar to the simulation in Section \ref{sec7} suggests giving smaller RMSEs when $\gamma$ is negative.
A data-adapted selection for $\gamma$ should be formulated in a near future discussion.

The result for the semiparametric consistency of the $\gamma$-power MDE is closely related to the theory of the partial likelihood for the proportional hazard model in survival analysis, in which the likelihood is factorized  
into the partial and remainder likelihood functions, cf Cox \cite{cox1975} and Battey et al. \cite{Battey-etal2022}.
The semiparametric models are both multiplicative, and hence  the $\gamma$-power MDE  for the survival analysis can be applied to with more discussion for a probabilistic mechanism of the time-to-event outcomes.

%The information divergence can be defined on the space of all the copula functions, which
%is contrast with that defined on the copula density functions.
%For the typical model of copula functions such as Clayton and Gumbel models  the minimum divergence method is applicable, and should be clarified 
%further properties in statistical perspective as a future project.

Finally, general remarks for information geometry are given in a personal view.
Information geometry is still a developing discipline, in which there potentially remain a vast of problems to be clarified by information geometry.
Obviously, physics has established and expanded many magnificent ideas in the physical space from a quantum scale to the universe scale;
mathematics presents elegant abstraction to strengthen such physical ideas.
However, we have to recognize that such an expansion should be done in the human brain system, in which virtual reality and data-driven societies are
materialized in the cyberspace in virtue of the progress of information science and technology, cf. \cite{harari2016}.
If there does not exist any consciousness, all the physical phenomena are sensibly observed or not?     
Any fundamental solution for such a basic problem regarding consciousness and phenomena has not yet been fulfilled.
Information geometry may be able to challenge to provide recognizable solutions via combining ideas such as information, entropy, and divergence
in integration from physical space to information space.

%The Author is on the board of the Journal Information Geometry.
%Data sharing is not applicable to this article as no data sets were generated or analyzed during the current study.
%On behalf of all authors, the corresponding author states that there is no conflict of interest.


\begin{thebibliography}{99}
%%%%%%%%%%%%%%%%%%%%%%%%%%%%%%%%%%%%%%%%%%%%%%%%%%%%%%%%%%%%%%%%%%%%%%%%%%%%%%%


\bibitem{amari1982}
Amari, S. I. (1982). Differential geometry of curved exponential families-curvatures and information loss. The Annals of Statistics, 357-385.

\bibitem{amari-nagaoka2007}
  Amari, S. I., and Nagaoka, H. Methods of information geometry (Vol. 191). American Mathematical Soc, 2007.

\bibitem{amari2016}  Amari, S. I., Information geometry and its applications.
Applied Mathematical Sciences, Vol. 194. Springer,
Berlin, 2016.

\bibitem{ay2017}
Ay, N., Jost, J., Van Le, H., Schwachhofer, L., 2017. Information Geometry. Springer, Cham.

\bibitem{Battey-etal2022}
Battey, H. S., Cox, D. R., \& Lee, S. H. (2022). On partial likelihood and the construction of factorisable transformations. Information Geometry, 1-20.

\bibitem{bellman1957}
Bellman, R. E. (1957). Dynamic programming. Princeton: Princeton University
Press.



\bibitem{chakraborty-moodie2013}
 Chakraborty, B., \& Moodie, E. E. (2013).
Statistical methods for dynamic
treatment regimes.  New York: Springer

\bibitem{cox1975} 
Cox, D.R.  (1975) Partial likelihood. Biometrika 62, 269–276.

\bibitem{eguchi-kato2010}
Eguchi, S., \& Kato, S. (2010). Entropy and divergence associated with power function and the statistical application. Entropy, 12(2), 262-274

\bibitem{eguchi-komori2022}
%Eguchi, S., \& Komori, O. (2022). Outcome Weighted Learning in Dynamic Treatment Regimes. In Minimum Divergence Methods in Statistical Machine Learning (pp. 197-216). Springer, Tokyo.
Eguchi, S., \& Komori, O. (2022). Minimum Divergence Methods in Statistical Machine Learning: From an Information Geometric Viewpoint. Springer Nature.

\bibitem{fujisawa-eguchi2008}
Fujisawa, H., Eguchi, S., 2008. Robust parameter estimation with a small bias against heavy contamination.
J. Multivar. Anal. 99, 2053–2081.

\bibitem{harari2016} Harari, Y. N. (2016). Homo Deus: A brief history of tomorrow. random house.

\bibitem{henmi2004} Henmi, M., \& Eguchi, S. (2004). A paradox concerning nuisance parameters and projected estimating functions. Biometrika, 91(4), 929-941.

\bibitem{imai-yamamoto2013}
Imai, K., \& Yamamoto, T. (2013). Identification and sensitivity analysis for multiple causal mechanisms: Revisiting evidence from framing experiments. Political Analysis, 21(2), 141-171.

\bibitem{mccullagh-nelder2019}
McCullagh, P., \& Nelder, J. A. (2019). Generalized linear models. Routledge.

\bibitem{murphy2005}
Murphy, S. A. (2005). An experimental design for the development of adaptive treatment strategies. Statistics in Medicine, 24, 1455-1481.

\bibitem{qian-etal2021}
Qian, T., Yoo, H., Klasnja, P., Almirall, D., \& Murphy, S. A. (2021). Estimating time-varying causal excursion effects in mobile health with binary
outcomes. Biometrika, 108(3), 507-527.


\bibitem{rao1945}
Rao, C.R., 1945. Information and the accuracy attainable in the estimation of statistical parameters.
Bull. Calcutta Math. Soc. 37, 81–89.

\bibitem{robins-etal2000}
Robins, J. M., Hern\'{a}n, M. A., \& Brumback, B. (2000). Marginal structural models
and causal inference in epidemiology. Epidemiology, 11, 550.560.

\bibitem{robins2004}
Robins, J. M. (2004). Optimal structural nested models for optimal sequential decisions.
In D. Y. Lin \& P. Heagerty (Eds.), 

\bibitem{sutton-arto2018}
Sutton \& Barto (2018). Reinforcement learning: An introduction. Cambridge, MIT press.

\bibitem{tsiatis2006}
Tsiatis, A. A. (2006). Semiparametric theory and missing data.

\bibitem{wallace2015} Wallace, M. P., \& Moodie, E. E. (2015). Doubly‐robust dynamic treatment regimen estimation via weighted least squares. Biometrics, 71(3), 636-644.

\bibitem{watkins1989}
Watkins, C.J.C.H. (1989). Learning from Delayed Rewards. PhD thesis, Cambridge
University, Cambridge

\bibitem{zhao2015}
Zhao, Y. Q., Zeng, D., Laber, E. B., \& Kosorok, M. R. (2015). New statistical learning methods for estimating optimal dynamic treatment regimes. Journal of the American Statistical Association, 110(510), 583-598.

\end{thebibliography}
\end{document}